\documentclass[submission,copyright,creativecommons]{eptcs}
\usepackage{breakurl}             
\usepackage{underscore}           

\usepackage{amscd}
\usepackage{amsmath}
\usepackage{amssymb}
\usepackage{amsthm}
\usepackage{proof}
\usepackage[all]{xy}
\usepackage{graphicx}

\newtheoremstyle{break}
 {10pt}
 {10pt}
 {}
 {}
 {\bfseries}
 {.}
 { }
 {}
\theoremstyle{break}
\newtheorem{defn}{Definition}[section]
\newtheorem{thm}[defn]{Theorem}
\newtheorem{lem}[defn]{Lemma}
\newtheorem{cor}[defn]{Corollary}
\newtheorem{prop}[defn]{Proposition}
\newtheorem{exa}[defn]{Example}
\newtheorem{rem}[defn]{Remark}

\newcommand{\N}{\mathbb{N}}

\newcommand{\F}{\mathcal{F}}
\newcommand{\T}{\mathcal{T}}


\newcommand{\lqs}{\leq^{\mathsf{q}}}
\newcommand{\lqu}{<^{\mathsf{q}}}
\newcommand{\sid}[2]{\mathsf{idx}_{ #1}\{ #2 \}}

\newcommand{\qd}[1]{\mathsf{preGQ}(#1)}
\newcommand{\lt}[1]{\mathsf{GQ}(#1 )}


\newcommand{\hra}{\hookrightarrow}

\newcommand{\C}{\mathcal{C}}

\newcommand{\uhr}{\upharpoonright}

\title{On Quasi Ordinal Diagram Systems}
\author{Mitsuhiro Okada
\institute{Department of Philosophy\\Keio University\\ Tokyo, Japan\thanks{The first author is supported by KAKENHI 17H02263, 17H02265 and JSPS-AYAME.}}
\email{mitsu@abelard.flet.keio.ac.jp}
\and
Yuta Takahashi
\institute{Department of Social Informatics\\Nagoya University\\ Nagoya, Japan\thanks{The second author is supported by KAKENHI (Grant-in-Aid for JSPS Fellows) 16J04925.}}
\institute{Japan Society for the Promotion of Science\\ Tokyo, Japan}
\email{yuuta.taka84@gmail.com}
}

\begin{document}
\maketitle

\begin{abstract}
The purposes of this note are the following two; we first generalize Okada-Takeuti's well quasi ordinal diagram theory, utilizing the recent result of Dershowitz-Tzameret's version of tree embedding theorem with gap conditions. Second, we discuss possible use of such strong ordinal notation systems for the purpose of a typical traditional termination proof method for term rewriting systems, especially for second-order (pattern-matching-based) rewriting systems including a rewrite-theoretic version of Buchholz's hydra game.
\end{abstract}



\section{Introduction}
Dershowitz and Tzameret extended the Friedman-K\v{r}\'{i}\v{z}-Gordeev's tree embedding theorem with gap conditions by relaxing the well orderedness condition for the labels of tree-nodes to a well quasi orderedness condition. The first purpose of our paper is to generalize Okada-Takeuti's quasi ordinal diagram systems (\cite{takeuti1987,OT1987}) using this Dershowitz-Tzameret's result. The second purpose is to analyze to which extent such a non-simplification ordering could be used as an extension of the typical termination proof method based on simplification orderings. We especially consider a ``second-order (pattern-matching-based) rewrite rule" version of Buchholz's hydra game.

A typical termination proof method for first-order term rewriting systems is to show the termination of a term rewriting system $R$ by verifying that for each rewrite rule $l (\vec{x}) \to r(\vec{x})$ of $R$, $f( l ( \vec{x}) >f( r ( \vec{x}) )$ holds, where $f$ is a strictly order-preserving mapping and $<$ is a well founded ordering with the substitution property and the monotonicity property. Here, the substitution property and the monotonicity property mean (i) for any substitution (for the list of variables) $\sigma$, if $\alpha < \beta$ holds then $\alpha \sigma < \beta \sigma$ holds, and (ii) for any context $u[\ast ]$, if $\alpha < \beta$ holds then $u[\alpha] < u[\beta]$ holds, respectively. The properties (i) and (ii) guarantee the termination of the whole $R$ because any application of (first-order) rewrite rule $l (\vec{x}) \to r(\vec{x})$ has a form $u[ l \sigma ] \to u [ r \sigma ]$ for some context $u [\ast]$ and some substitution $\sigma$.
In this note, we restrict our attention to the identity $f$ for our basic argument.

The method has been widely used for termination proofs as well as a tool for Knuth-Bendix completion. The method itself would be attractive not only for the traditional first-order rewriting but also for higher-order or graphic-pattern-matching-based rewriting. One could expect that strong and general ordering structures in proof theory would be useful for this termination proof method of higher-order pattern-matching-based rewrite systems.

However, the use of strong orderings such as $<_i$ on Takeuti's ordinal diagram systems, which is a non-simplification ordering, cannot satisfy the two basic properties (i) and (ii). Because of this difficulty, instead of the traditional termination proof method, various different techniques for the termination of higher-order rewriting systems have been utilized; for example, Jouannaud and Okada (\cite{JO1991}) introduced a generalized form of Tait-Girard's reducibility candidates method (cf. also \cite{BJO2002,BJO2018} with Blanqui).

Hence at a first look, it seems hard to adapt ordinal diagram systems to the traditional termination method. It is a natural question how we could adapt them to the termination proof method especially for higher-order rewriting systems. We aim to answer this question in the present paper.

This paper is structured as follows. We first define our generalized quasi ordinal diagram systems $\lt{I,A}$ (\S \ref{elements}), then prove the well quasi orderedness of these systems as a corollary of the Dershowitz-Tzameret's version of tree embedding theorem (\S \ref{DT}). Next, we propose a termination proof method induced by the monotonicity property and a restricted substitution property of $\lt{I,A}$ (\S \ref{app}). Finally, we take a version of Buchholz game as an example of pattern-matching-based second-order rewrite systems and show its termination by another termination proof method in terms of $\lt{I,A}$ (\S \ref{buch}). Note that the termination of the original version and its variants of Kirby-Paris's hydra game could be proved in the traditional termination method of simplification orderings (cf. \cite{isihara2007}).



\section{Well quasi ordered systems of generalized quasi ordinal diagrams}\label{sectwo}
In this section, we first generalize quasi ordinal diagram systems $(\mathsf{Q} (I,A), \leq_i )$ of Okada-Takeuti (\cite{OT1987,takeuti1987}) by using an arbitrary well partial ordering $I$ as the inner node labels (\S \ref{elements}).
Next, we prove the well quasi orderedness of these generalized systems as a direct corollary of the Dershowitz-Tzameret's version (\cite{DT2003}) of tree embedding theorem with gap condition (\S \ref{DT}).



\subsection{Formulation of generalized quasi ordinal diagram systems}\label{elements}
A \textit{quasi ordering} is a pair $(D ,\leq)$ of a set $D$ and a binary relation $\leq$ such that for any $a \in D$, $a \leq a$ holds (reflexivity) and for any $a,b,c \in D$, if $a \leq b$ and $b \leq c$ hold then $a \leq c$ holds (transitivity). A \textit{partial ordering} is a quasi ordering $(D,\leq)$ with the antisymmetry: For any $a,b \in D$, if $a \leq b$ and $b \leq a$ hold then $a = b$ holds. A \textit{linear ordering} is a partial ordering $(D,\leq)$ with the linearity: For any $a,b \in D$, $a \leq b$ or $b \leq a$ holds. A \textit{well quasi ordering} is a quasi ordering $(D, \leq )$ such that for any infinite sequence $a_0, a_1,\ldots$ from $D$, there are numbers $n$ and $m$ such that $n < m$ and $a_n \leq a_m$ holds. A \textit{well partial ordering} is a partial ordering that is a well quasi ordering. For a quasi ordering $(D,\leq )$, we use abbreviation ``$a < b$" for ``$a \leq b$ and $b \not\leq a$." Note that the well quasi orderedness has a weaker definition saying that for any infinite $\leq$-decreasing sequence $a_0 \geq a_1 \geq \ldots$ from $D$, there are numbers $n$ and $m$ such that $n < m$ and $a_n \leq a_m$ holds. A \textit{weak well quasi ordering} is a quasi ordering $(D, \leq)$ satisfying this condition. In this paper, we show the stronger version of well quasi orderedness of our quasi ordinal diagram systems.

Let $(I ,\leq_I )$ be a well partial ordering and $(A, \leq_A)$ be a well quasi ordering. We define the set $\F_0$ of constants and the set $\F_1$ of unary function symbols as follows: $\F_0 := A$, $\F_1 := \{ f_i \mid i \in I \}$. Let $\F$ be a signature defined as the union of $\F_0$, $\F_1$ and $\{ \# \}$ with a varyadic function symbol $\#$.

The \textit{pre-domain} $\qd{I,A}$ \textit{of generalized quasi ordinal diagrams} on $(I ,\leq_I )$ and $(A ,\leq_A )$ is the set $\T (\F)$ of all terms constructed from symbols in $\F$. To follow the notation in \cite{OT1987,takeuti1987}, we denote terms of $\qd{I,A}$ by $\alpha ,\beta ,\gamma , \ldots$ and adopt the following abbreviations: $(i, \alpha ) := f_i ( \alpha )$, $\alpha_1 \# \cdots \# \alpha_n := \# (\alpha_1 ,\ldots , \alpha_n)$. We call a term that belongs to $\F_0$ or has the form of $(i, \alpha )$ a \textit{connected term} of $\qd{I,A}$, and a term that has the form $\alpha_1 \# \cdots \# \alpha_n$ an \textit{unconnected term} of $\qd{I,A}$. When an unconnected term $\alpha$ is indicated as $\alpha \equiv \alpha_1 \# \cdots \# \alpha_n$, we assume that all of $\alpha_1 ,\ldots , \alpha_n$ are connected. For a connected term $\alpha$, a term $\beta$ is a \textit{component} of $\alpha$ if and only if $\beta \equiv \alpha$ holds. For an unconnected term $\alpha \equiv \alpha_1 \# \cdots \# \alpha_n$, a term  $\beta$ is a \textit{component} of $\alpha$ if and only if for some $k$ with $1 \leq k \leq n$, $\beta \equiv \alpha_k$ holds.

\begin{defn}[Labeled Finite Trees]
\textit{Labeled finite trees} are defined as follows.
\begin{enumerate}

\item A \textit{finite tree} is a partial ordering $(T, \leq_T)$ such that $T$ is a finite set with the $\leq_T$-least element called the \textit{root}, and for any $a \in T$, the set $\{b \in T \mid b \leq_T a \}$ is linearly ordered with respect to $\leq_T$.

\item Let $(I, \leq_I)$ be a quasi ordering. A $I$-\textit{labeled finite tree} is a pair of a finite tree $(T, \leq_T)$ and a mapping $l : T \to I$.

\end{enumerate}
\end{defn}
Let $(T,\leq_T)$ be a finite tree. For any $a \in T$, if the set $\{ b \in T \mid b <_T a \}$
is non-empty, then we call its greatest element
the \textit{immediate lower node}
of $a$. We call $\leq_T$-maximal elements of $T$ \textit{leaves} of $T$. The $\leq_T$-greatest lower bound of $\{a , b\}$ is denoted by $a \land b$.

\begin{exa}\label{exazero}
Let $I$ and $A$ be the following well partial ordering and well quasi ordering, respectively, where the arrow $i \rightarrow j$ means $i <_I j$ and $a \rightsquigarrow b$ means $a \leq_A b$.
\[
\begin{xy}
(-10,0) *{I :=} ="C",
(0, 0) *{\bullet}*+!R{0} ="A",
(10, 3) *{\bullet}*+!D{1} ="A1", (20, 3) *{\bullet}*+!D{2} ="A2", (30, 3) *+{\cdots} ="A3",
(10, -3) *{\bullet}*+!U{1'} ="B1", (20, -3) *{\bullet}*+!U{2'} ="B2", (30, -3) *+{\cdots} ="B3",
(40, 0) *{\bullet}*+!L{\omega '} ="B",
\ar "A";"A1" \ar "A1";"A2" \ar "A2";"A3" \ar "A3";"B"
\ar "A";"B1" \ar "B1";"B2" \ar "B2";"B3" \ar "B3";"B"
\end{xy}
\qquad\qquad
\begin{xy}
(-10,0) *{A :=} ="C",
(0, 0) *{\bullet}*+!R{0''} ="A",
(10, 3) *{\bullet}*+!D{1''} ="A1", (20, 3) *{\bullet}*+!DL{2''} ="A2",
(10, -3) *{\bullet}*+!U{1'''} ="B1", (20, -3) *{\bullet}*+!UL{2'''} ="B2",
\ar @{~>}"A";"A1" \ar @{~>}"A1";"A2"
\ar @{~>}"A";"B1" \ar @{~>}"B1";"B2"
\ar @<1mm>@{~>}"A2";"B2" \ar @<1mm>@{~>}"B2";"A2"
\end{xy}
\]
In addition, we stipulate that for any $i \in I$ and any $a \in A$, $a <_{I} i$ hold. Here we use the symbol $\omega '$ since $I$ is similar to the ordinal $\omega$. An example of forest representation are as follows.
\[
\begin{xy}
(20,-9) *{(0, 0'' \# 0'') \# (2' ,(2',1''') \# (1',1'')) \# (1, 0'' \# (1, 2'' \# 0''))} ="D",
(0,0) *=0{\bullet}*+!U{0} ="A1", (-5,5) *=0{\bullet}*+!D{0''} ="A2", (5,5) *=0{\bullet}*+!D{0''} ="A3",
(20,0) *=0{\bullet}*+!U{2'} ="B1", (15,5) *=0{\bullet}*+!RU{2'} ="B2", (10,10) *=0{\bullet}*+!D{1'''} ="B3", (25,5) *=0{\bullet}*+!LU{1'} ="B4", (30,10) *=0{\bullet}*+!D{1''} ="B5",
(40,0) *=0{\bullet}*+!U{1} ="C0", (45,5) *=0{\bullet}*+!LU{1} ="C1", (40,10) *=0{\bullet}*+!D{2''} ="C2", (50,10) *=0{\bullet}*+!D{0''} ="C3", (35,5) *=0{\bullet}*+!D{0''} ="C4",
\ar @{-}"A1";"A2" \ar @{-}"A1";"A3"
\ar @{-}"B1";"B2" \ar @{-}"B2";"B3" \ar @{-}"B1";"B4" \ar @{-}"B4";"B5"
\ar @{-}"C0";"C1" \ar @{-}"C1";"C2" \ar @{-}"C1";"C3" \ar @{-}"C0";"C4"
\end{xy}
\]
\end{exa}

Note that $\#$ is represented by tree's sum and branching. By the definition of the identity below, $\#$ denotes the associative-commutative sum of connected terms, which is called ``natural sum." In term rewrite orderings such as the recursive path ordering, $\alpha_1 \# \cdots \# \alpha_n$ is often represented as $\langle \alpha_1 ,\ldots , \alpha_n \rangle$.
\begin{defn}[Identity on $\qd{I,A}$]
For any two terms $\alpha$ and $\beta$ of $\qd{I,A}$, the \textit{identity relation} $\alpha = \beta$ holds if and only if either (1) both of $\alpha$ and $\beta$ are elements of $A$, and $\alpha$ is identical with $\beta$ in the sense of $A$, or (2) $\alpha \equiv (i, \alpha ')$, $\beta \equiv (i, \beta ')$ and $\alpha ' = \beta '$ hold, or (3) $\alpha \equiv \alpha_1 \# \cdots \# \alpha_n$ and $\beta \equiv \beta_1 \# \cdots \# \beta_n$ hold with $n > 1$ and there is a permutation $p$ of $\{ 1, \ldots , n\}$ such that $\alpha_i = \beta_{p(i)}$ holds for any $i$.
\end{defn}

\begin{defn}[$i$-sections]\label{subset}
For any two terms $\alpha$ and $\beta$ of $\qd{I,A}$ and any element $i$ of $I$, the relation $\alpha \subset_i \beta$, which we call ``$\alpha$ is an $i$-\textit{section} of $\beta$", is defined as follows.
\begin{enumerate}
\item If $\beta$ is an element of $A$, then $\alpha \subset_i \beta$ never holds.
\item If $\beta \equiv (j, \beta ')$ holds, then
  \begin{enumerate}
  \item when $i = j$ holds, $\alpha \subset_i \beta$ if and only if $\alpha = \beta '$ or $\alpha \subset_i \beta '$,
  \item when $i < j$ holds, $\alpha \subset_i \beta$ if and only if $\alpha \subset_i \beta '$,
  \item when $i \not\leq j$ holds, $\alpha \subset_i \beta$ never holds.
  \end{enumerate}
\item If $\beta \equiv \beta_1 \# \ldots \# \beta_m$ holds with $m > 1$, then $\alpha \subset_i \beta$ if and only if for some $\beta_l \; (1\leq l\leq m)$, $\alpha \subset_i \beta_l$.
\end{enumerate}
\end{defn}
\begin{exa}[An example of an $i$-section]\label{exaone}
Consider the domain $\qd{I ,A}$ defined in Example \ref{exazero}.
\begin{center}
\begin{xy}
(0,0) *=0{\bullet}*+!U{5} ="A", (3,3) *=0{\bullet}*+!L{\omega '} ="B", (0,6) *=0{\bullet}*+!R{3} ="C", (3,9) *=0{\bullet}*+!LU{2} ="D",
(3,14) *{\alpha} ="F", (-23,6) *{\beta :=} ="G",
(-18,18) ="E1", (-6,18) ="E2", (18,18) ="E3", (12,18) ="E4",
(70, 11) *{\text{Here $3,\omega ' ,5$ appears as labels, in this order, below $(2, \alpha )$.}} ="E5",
(70, 6) *{\text{Since $3,\omega ' ,5$ are greater than or equal to $2$, $\alpha \subset_2 \beta$ holds by definition,}} ="E6",
(70, 1) *{\text{where }\beta \equiv (5, \cdots \# (\omega ' , \cdots \# (3, \cdots \# (2, \alpha ) \# \cdots ) \# \cdots ) \# \cdots ).}
\ar @{-}"A";"B" \ar @{-}"B";"C" \ar @{-}"C";"D"
\ar @{-}"A";"E1" \ar @{-}"D";"E2" \ar @{-}"A";"E3" \ar @{-}"D";"E4" \ar @{-}"E1";"E3"
\end{xy}
\end{center}
\end{exa}
An element $i$ of $I$ is an \textit{index} of $\alpha$ if and only if there is a $\beta$ such that $\beta$ is an $i$-section of $\alpha$.
Set $\tilde{I} := I \cup \{ \infty \}$. For any $i \in I$ and any finite set $\{ \alpha_0 ,\ldots , \alpha_n \}$ of terms, define $\sid{i}{\alpha_0 ,\ldots , \alpha_n} := \{ j \mid i < j,\;j \text{ is an index of } \alpha_0 \text{ or } \ldots \text{ or } \alpha_n\}$. We denote the cardinality of $\sid{i}{\alpha_0 ,\ldots , \alpha_n}$ by $\# \sid{i}{\alpha_0 ,\ldots , \alpha_n}$. For any finite set $\{ \alpha_0 ,\ldots , \alpha_n \}$ of terms, the total number of all occurrences of $( , )$ and $\#$ in $\alpha_0 ,\ldots , \alpha_n$ is denoted by $l(\alpha_0 ,\ldots , \alpha_n)$.

\begin{defn}[Ordering $\lqs_i$ on $\qd{I,A}$]\label{less}
For any element $i$ of $\tilde{I}$, the relation $\lqs_i$ on $\qd{I,A}$ is defined by double induction on (1) $l(\alpha , \beta )$ and (2) $\# \sid{i}{\alpha , \beta}$, in this order.
\begin{enumerate}

\item If both of $\alpha$ and $\beta$ are elements of $A$, then for any element $i$ of $\tilde{I}$, $\alpha \lqs_i \beta$ if and only if $\alpha \leq_A \beta$.

\item If $\alpha$ is an element of $A$ and $\beta$ is not, then for any element $i$ of $\tilde{I}$, both $\alpha \lqs_i \beta$ and $\beta \not\lqs_i \alpha$ hold.

\item If $\alpha \equiv \alpha_1 \# \ldots \# \alpha_n$ and $\beta \equiv \beta_1 \# \ldots \# \beta_m$ hold with $n+m > 2$, then for any element $i$ of $\tilde{I}$, $\alpha \lqs_i \beta$ if and only if one of the following conditions holds:
  \begin{enumerate}
  \item there is a $\beta_l \; (1\leq l \leq m)$ such that for any $k\;(1\leq k \leq n)$, $\alpha_k \lqs_i \beta_l$ and $\beta_l \not\lqs_i \alpha_k$ hold,
  \item there is a $\beta_l \; (1\leq l \leq m)$ such that $\alpha_1 \lqs_i \beta_l$, and if $n \geq 2$ then the following holds.
    \begin{center}
    $\alpha_2 \# \ldots \# \alpha_{n} \lqs_i \beta_1 \# \ldots \# \beta_{l-1} \# \beta_{l+1} \# \ldots \# \beta_{m}$
    \end{center}
  \end{enumerate}

\item If $\alpha \equiv (j ,\alpha_0 )$, $\beta \equiv (j', \beta_0 )$ and $i = \infty$ hold, then $\alpha \lqs_{\infty} \beta$ if and only if either $j < j'$ holds or both $j = j'$ and $\alpha_0 \lqs_j \beta_0$ hold.

\item If $\alpha \equiv (j ,\alpha_0 )$ and $\beta \equiv (j', \beta_0 )$ hold and $i$ is an element of $I$, then $\alpha \lqs_i \beta$ if and only if either
\begin{enumerate}

\item[($\exists$)] there is a $\beta ' \subset_i \beta$ such that $\alpha \lqs_i \beta '$, or

\item[($\forall$)] for any $\alpha ' \subset_i \alpha$, both of $\alpha ' \lqs_i \beta$ and $\beta \not\lqs_i \alpha '$ hold, and if $\sid{i}{\alpha ,\beta} \neq \emptyset$ holds then $\alpha \lqs_{j} \beta$ holds for any minimal element $j$ of $\sid{i}{\alpha ,\beta}$, otherwise $\alpha \lqs_{\infty} \beta$ holds.

\end{enumerate}
We say $\alpha \lqs_i \beta$ \textit{holds by} $\exists$-\textit{condition}, when the condition ($\exists$) above holds. Similarly, we say $\alpha \lqs_i \beta$ \textit{holds by} $\forall$-\textit{condition}, when the condition ($\forall$) above holds. Note that if $I$ is a \textit{finite} set, $j$ in the condition ($\forall$) can be taken as a $\leq_I$-maximal element of $\{ h \mid h >_I i \}$, and the resulting ordering $\lqs_i$ becomes the same as the one defined above.
\end{enumerate}
\end{defn}

\begin{rem}
Consider $\qd{I ,A}$ defined in Example \ref{exazero}. For the readers familiar to the recursive path ordering $\geq_{rpo}$ (cf. \cite{dershowitz1987}), the table below suggests some similarity of $\geq_{i}^{\mathsf{q}}$ to $\geq_{rpo}$ and the richness of $\geq_{i}^{\mathsf{q}}$ in the sense that $\geq_{i}^{\mathsf{q}}$ for some $i \in I$ depends on $\geq_{j}^{\mathsf{q}}$ for another $j \in I$. When we stipulate that $i > \#$ holds for any $i \in I$, we have the following table:
\begin{center}
\begin{tabular}{ | l | l | }
\hline
Recursive path ordering $\geq_{rpo}$ & The ordering $\geq_{\omega '}^{\mathsf{q}}$ \\
\hline
$\alpha \equiv (i, \alpha_1 \# \cdots \# \alpha_m ) \geq_{rpo} (j, \beta_1 \# \cdots \# \beta_n ) \equiv \beta $ & $\alpha \equiv (i, \alpha_1 \# \cdots \# \alpha_m ) \geq_{\omega '}^{\mathsf{q}} (j, \beta_1 \# \cdots \# \beta_n ) \equiv \beta$ \\
if $\alpha_k \geq_{rpo} \beta$ for some $k$ with $1 \leq k \leq m$, & if $\gamma \geq_{\omega '}^{\mathsf{q}} \beta$ for some $\gamma \subset_{\omega '} \alpha$, \\
or $i > j$ and $\alpha >_{rpo} \beta_l$ for any $k$ with $1 \leq l \leq n$, & or $i > j$ and $\alpha \geq_{\omega '}^{\mathsf{q}} \delta ,\; \delta \ngeq_{\omega '}^{\mathsf{q}} \alpha$ for any $\delta \subset_{\omega '} \beta$, \\
or $i = j$ and $\alpha_1 \# \cdots \# \alpha_m \geq_{rpo} \beta_1 \# \cdots \# \beta_n$. & or $i=j$ and $\alpha \geq_{\omega '}^{\mathsf{q}} \delta ,\; \delta \ngeq_{\omega '}^{\mathsf{q}} \alpha$ for any $\delta \subset_{\omega '} \beta$ and \\
 & $\quad \alpha_1 \# \cdots \# \alpha_m \geq_{i}^{\mathsf{q}} \beta_1 \# \cdots \# \beta_n$. \\
\hline
\end{tabular}
\end{center}
\end{rem}
\if0
\begin{exa}\label{exatwo}
Consider $I^*$ and $A^*$ in Example \ref{exaone}. In $\qd{I^* ,A^*}$, $(3, (5,0) \# (5',0)) \lqs_{3} (\omega ',0)$ holds. Let us explain this fact by tree-representation as follows.
\[
\begin{xy}
(0,0) *+[Fo:<2.5mm>]{3} ="A", (-7,7) *+[Fo:<2.5mm>]{5} ="B", (7,7) *+[Fo:<2.5mm>]{5'} ="C", (-7,14) *+[Fo:<2.5mm>]{0} ="D", (7,14) *+[Fo:<2.5mm>]{0} ="E",
\ar @{-}"A";"B" \ar @{-}"A";"C" \ar @{-}"B";"D" \ar @{-}"C";"E"
\end{xy}
\qquad \lqs_3 \qquad
\begin{xy}
(0,0) *+[Fo:<2.5mm>]{\omega '} ="A", (0,7) *+[Fo:<2.5mm>]{0} ="B",
\ar @{-}"A";"B"
\end{xy}
\]
This ordering relation holds by $\forall$-condition. First, the $3$-section $(5,0) \# (5',0)$ of $(3, (5,0) \# (5',0))$, which corresponds to the sum of the two immediate subtrees of the left tree above, is smaller than $(\omega ' ,0)$ in the sense of $\lqs_3$, because both $(5,0) \lqs_3 (\omega ',0)$ and $(5',0) \lqs_3 (\omega ',0)$ holds. Next, we have $(3, (5,0) \# (5',0)) \lqs_{\infty} (\omega ',0)$ because of the fact that $3 \leq_{I^*} \omega '$ holds: The label $3$ of the left tree's root is smaller than the label $\omega '$ of the right tree's root.
\end{exa}
\fi

\begin{lem}\label{qo}
For any well partial ordering $I$, any well quasi ordering $A$ and any element $i$ of $\tilde{I}$,

\noindent$(\qd{I,A} ,\lqs_i )$ is a quasi ordering.
\end{lem}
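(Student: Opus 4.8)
The plan is to establish the two defining properties of a quasi ordering, reflexivity and transitivity, by a well-founded induction that mirrors the double induction of Definition~\ref{less}: primarily on the total length $l(\cdots)$ of the terms involved and secondarily on $\#\sid{i}{\cdots}$, now carried out simultaneously for two terms (reflexivity) and for three terms (transitivity), and simultaneously over all $i \in \tilde{I}$. Along the way I would carry two auxiliary statements inside the induction hypothesis: (S1) every $i$-section is strictly below its host, i.e.\ if $\alpha' \subset_i \alpha$ then $\alpha' \lqs_i \alpha$ and $\alpha \not\lqs_i \alpha'$; and (S2) a section-monotonicity principle, namely that $\alpha \lqs_i \beta'$ with $\beta' \subset_i \beta$ implies $\alpha \lqs_i \beta$ (this is immediate from the ($\exists$)-clause), together with its companion that if $\beta \lqs_i \gamma$ holds and $\beta' \subset_i \beta$ then $\beta' \lqs_i \gamma$ (immediate from the ($\forall$)-clause when $\beta \lqs_i \gamma$ holds by ($\forall$), and by induction on $l(\gamma)$ otherwise).

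For reflexivity I would argue by cases on the shape of $\alpha$. If $\alpha \in A$ then $\alpha \lqs_i \alpha$ follows from reflexivity of $\leq_A$ via clause (1). If $\alpha \equiv \alpha_1 \# \cdots \# \alpha_n$ is unconnected, clause (3b) applies: matching $\alpha_1$ with its own copy and recursing on $\alpha_2 \# \cdots \# \alpha_n$ gives $\alpha \lqs_i \alpha$ from reflexivity of the shorter components. If $\alpha \equiv (j,\alpha_0)$ is connected and $i = \infty$, clause (4) reduces the claim to $\alpha_0 \lqs_j \alpha_0$, which has smaller $l$. If $\alpha \equiv (j,\alpha_0)$ and $i \in I$, I would verify the ($\forall$)-clause: the section subclauses are exactly (S1), while the index subclause asks for $\alpha \lqs_{j'} \alpha$ at a minimal $j' \in \sid{i}{\alpha,\alpha}$ (or $\alpha \lqs_\infty \alpha$); since $j' >_I i$ we have $\#\sid{j'}{\alpha,\alpha} < \#\sid{i}{\alpha,\alpha}$, so this is available from the secondary induction hypothesis.

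For transitivity, suppose $\alpha \lqs_i \beta$ and $\beta \lqs_i \gamma$; I want $\alpha \lqs_i \gamma$. The $A$-cases clear quickly: by clause (2) a non-$A$ term never lies below an $A$ term, so if $\gamma \in A$ then $\beta,\alpha \in A$ and we use transitivity of $\leq_A$, while if $\alpha \in A$ and $\gamma \notin A$ clause (2) gives $\alpha \lqs_i \gamma$ outright. When some term is unconnected the comparisons are governed by clause (3), and transitivity follows by the usual Higman/multiset argument, by induction on the number of components together with the length induction. For connected $\alpha,\beta,\gamma$ with $i=\infty$, clause (4) makes $\lqs_\infty$ a lexicographic product of $<_I$ and the $\lqs_j$'s, whose transitivity follows from transitivity of $<_I$ and the induction hypothesis for $\lqs_j$ applied to $\alpha_0,\beta_0,\gamma_0$. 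The remaining, and central, case is connected $\alpha,\beta,\gamma$ with $i \in I$, where I split on how the premises hold. If $\beta \lqs_i \gamma$ holds by ($\exists$), say $\beta \lqs_i \gamma'$ with $\gamma' \subset_i \gamma$, then $(\alpha,\beta,\gamma')$ has smaller total length, so the induction hypothesis yields $\alpha \lqs_i \gamma'$ and then ($\exists$) gives $\alpha \lqs_i \gamma$. If instead $\alpha \lqs_i \beta$ holds by ($\exists$), say $\alpha \lqs_i \beta'$ with $\beta' \subset_i \beta$, then (S2) gives $\beta' \lqs_i \gamma$, and $(\alpha,\beta',\gamma)$ again has smaller total length, so the induction hypothesis produces $\alpha \lqs_i \gamma$.

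The hard part will be the ($\forall$)--($\forall$) subcase, where both premises hold by the ($\forall$)-clause and I must re-establish the ($\forall$)-clause for $\alpha \lqs_i \gamma$. Its section subclauses are manageable: for $\alpha' \subset_i \alpha$ the inequality $\alpha' \lqs_i \gamma$ follows from $\alpha' \lqs_i \beta$ and $\beta \lqs_i \gamma$ by the length induction (as $l(\alpha') < l(\alpha)$), and the strictness $\gamma \not\lqs_i \alpha'$ follows by contradiction, since $\gamma \lqs_i \alpha'$ together with $\beta \lqs_i \gamma$ would give $\beta \lqs_i \alpha'$ by induction, contradicting the strictness $\beta \not\lqs_i \alpha'$ supplied by the first premise. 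The genuine obstacle is the index subclause: I must deduce $\alpha \lqs_{j''} \gamma$ for every minimal $j'' \in \sid{i}{\alpha,\gamma}$ (or $\alpha \lqs_\infty \gamma$), whereas the premises only furnish $\alpha \lqs_{j} \beta$ and $\beta \lqs_{j'} \gamma$ at the minimal indices of $\sid{i}{\alpha,\beta}$ and $\sid{i}{\beta,\gamma}$, which need not be comparable with $j''$. Bridging these requires an auxiliary comparison tracking how the indices occurring in $\beta$ mediate between those of $\alpha$ and $\gamma$, and it is precisely here that the secondary induction on the measure $\#\sid{i}{\cdots}$ must be invoked (legitimately, since each relevant index exceeds $i$ and hence strictly shrinks the index set). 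I expect this index bookkeeping --- matching up the minimal indices across the three terms --- to be the delicate core of the argument, exactly as in Takeuti's original transitivity proof for ordinal diagrams.
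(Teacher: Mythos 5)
Your plan coincides with the paper's own proof in its architecture: the same double induction on $\langle l(\cdot), \#\sid{i}{\cdot}\rangle$ carried out simultaneously for all $i \in \tilde{I}$, the same auxiliary statement (your (S1) is exactly the second half of the paper's reflexivity sublemma, proved simultaneously with $\alpha \lqs_i \alpha$), and the same dispatching of the $A$-cases, the $\#$-cases, the $\infty$-case, and the two $(\exists)$-cases of transitivity (first reduce on an $(\exists)$-witness for $\beta \lqs_i \gamma$, then on one for $\alpha \lqs_i \beta$ using the $(\forall)$-clause of the second premise). But at the decisive point you stop: in the $(\forall)$--$(\forall)$ case you correctly observe that the premises only supply $\alpha \lqs_{j} \beta$ and $\beta \lqs_{j'} \gamma$ at minimal elements of $\sid{i}{\alpha,\beta}$ and $\sid{i}{\beta,\gamma}$, which need not include a given minimal $k' \in \sid{i}{\alpha,\gamma}$, and you then defer the ``index bookkeeping'' as something you \emph{expect} to be delicate. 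That bookkeeping is the actual content of the lemma, and your proposal does not supply the two ideas needed to close it. First, when $k'$ is minimal in neither $\sid{i}{\alpha,\beta}$ nor $\sid{i}{\beta,\gamma}$, one chooses a minimal index $h'$ of $\beta$ with $i < h' < k'$; this $h'$ is then minimal in \emph{both} of those sets, so the induction hypothesis yields $\alpha \lqs_{h'} \gamma$; and since neither $\alpha$ nor $\gamma$ has $h'$ as an index, that relation can only hold by the $(\forall)$-condition, whose index clause delivers $\alpha \lqs_{k''} \gamma$ for every minimal $k''$ of $\sid{h'}{\alpha,\gamma}$ --- and $k'$ is such a $k''$, by its minimality in $\sid{i}{\alpha,\gamma}$. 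Second, in the empty-set configurations (e.g.\ $\sid{i}{\alpha,\gamma} = \emptyset$ while $\sid{i}{\alpha,\beta} \neq \emptyset \neq \sid{i}{\beta,\gamma}$) one must argue that the comparisons at minimal indices can only hold by the $(\forall)$-condition (an $(\exists)$-witness would create an index above $i$ in $\gamma$, a contradiction) and iterate this climb through the strictly shrinking index sets until $\alpha \lqs_{\infty} \beta$ and $\beta \lqs_{\infty} \gamma$ are reached, whence $\alpha \lqs_{\infty} \gamma$. Without these devices the $(\forall)$--$(\forall)$ case does not go through, so there is a genuine gap at exactly the step you flag as the ``delicate core''; also note the mixed cases ($k'$ minimal in one set but not the other) force $\sid{i}{\beta,\gamma} = \emptyset$ or $\sid{i}{\alpha,\beta} = \emptyset$ and are handled by combining the two devices.

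A smaller but real omission: you load (S1) into the induction hypothesis but never discharge it. It is not automatic; the paper proves it by its own case analysis on the shape of the host term, with the strictness half $\beta \not\lqs_i \alpha$ for $\alpha \subset_i \beta$ obtained by contradiction from the $(\exists)/(\forall)$ split of a putative $\beta \lqs_i \alpha$ (an $(\exists)$-witness contradicts strictness of sections via transitivity already established at smaller measure; the $(\forall)$-clause is refuted by instantiating its section quantifier at $\alpha$ itself). Your (S2) and its companion, by contrast, are handled correctly and match the paper's usage.
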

\begin{proof}
Prove the following sublemmas (1) and (2) by double induction on $\langle l (\alpha ,\beta ,\gamma),  \# \sid{i}{\alpha ,\beta ,\gamma} \rangle$ and double induction on $\langle l(\alpha ) , \#\sid{i}{\alpha } \rangle$, respectively. (1) (Transitivity) For every $\alpha ,\beta ,\gamma \in \qd{I,A}$ and every $i \in \tilde{I}$, if $\alpha \lqs_i \beta$ and $\beta \lqs_i \gamma$ hold then $\alpha \lqs_i \gamma$ holds. (2) (Reflexivity) For every $\alpha \in \qd{I,A}$ and every $i \in \tilde{I}$, the following two hold: $\alpha \lqs_{i} \alpha$ holds, and if $i \in I$ holds then for any $\beta \in \qd{I,A}$ with $\alpha \subset_i \beta$, $\alpha \lqs_{i} \beta$ and $\beta \not\lqs_{i} \alpha$ hold.
\if0


\paragraph{Proof of 1.}
Induction on $\omega \cdot  l(\alpha , \beta , \gamma )  + \#\sid{i}{\alpha ,\beta ,\gamma}$. We consider only the following case: $\alpha \equiv (j, \alpha ')$, $\beta \equiv (k, \beta ')$, $\gamma \equiv (h, \gamma ')$. We assume that both $\alpha \lqs_{i} \beta$ and $\beta \lqs_{i} \gamma$ hold.

Suppose that $i = \infty$ holds. It suffices to show that either 
\begin{itemize}
\item $j < h$ or

\item $j = h, \alpha ' \lqs_j \gamma '$

\end{itemize}
holds. If $j < k$ or $k < h$ holds by the assumption, then we have $j < h$. If $j = k = h$ holds by the assumption, then we have $\alpha ' \lqs_j \gamma '$ by IH.

Next, suppose that $i \in I$ holds. When there is a $\delta \subset_i \gamma$ with $\beta \lqs_i \delta$, then $\alpha \lqs_i \beta \lqs_i \delta$ holds. Therefore, $\alpha \lqs_i \delta$ holds by IH, so we have $\alpha \lqs_i \gamma$.

Suppose that for any $\eta \subset_i \beta$, $\eta \lqs_i \gamma$ and $\gamma \not\lqs_i \eta$, and
\begin{enumerate}
\item[(i)] $\beta \lqs_{i'} \gamma$ for any minimal element $i'$ of $\sid{i}{\beta ,\gamma}$, if $\sid{i}{\beta ,\gamma} \neq \emptyset$,

\item[(ii)] otherwise, $\beta \lqs_{\infty} \gamma$.
\end{enumerate}
When there is a $\delta \subset_i \beta$ with $\alpha \lqs_i \delta$, then we have $\alpha \lqs_i \delta \lqs_i \gamma$. Therefore, $\alpha \lqs_i \gamma$ holds by IH.

Suppose that for any $\delta \subset_i \alpha$, $\delta \lqs_i \beta$ and $\beta \not\lqs_i \delta$, and
\begin{enumerate}
\item[(iii)] $\alpha \lqs_{j'} \beta$ for any minimal element $j'$ of $\sid{i}{\alpha ,\beta}$, if $\sid{i}{\alpha ,\beta} \neq \emptyset$,

\item[(iv)] otherwise, $\alpha \lqs_{\infty} \beta$.
\end{enumerate}
Then, for any $\delta \subset_i \alpha$, we have $\delta \lqs_i \gamma$ by IH. It also holds that $\gamma \not\lqs_i \delta$, because if $\gamma \lqs_i \delta$ holds then we have $\beta \lqs_i \delta$ by IH, contradiction. To show that $\alpha \lqs_i \gamma$ holds in this case, it suffices to verify
\begin{itemize}
\item $\alpha \lqs_{k'} \gamma$ for any minimal element $k'$ of $\sid{i}{\alpha ,\gamma}$, if $\sid{i}{\alpha ,\gamma} \neq \emptyset$,

\item otherwise, $\alpha \lqs_{\infty} \gamma$.

\end{itemize}
We distinguish the following cases.
\\[10pt]
(\textbf{Case 1})
Suppose that $\sid{i}{\alpha ,\gamma} = \emptyset$ holds.
\\[10pt]
(\textbf{Case 1.1})
$\sid{i}{\alpha ,\beta} = \emptyset$: Then, $\sid{i}{\beta ,\gamma} = \emptyset$ also holds. Therefore, we have $\alpha \lqs_{\infty} \gamma$ by (ii), (iv) and IH.
\\[10pt]
(\textbf{Case 1.2})
$\sid{i}{\alpha ,\beta} \neq \emptyset$: $\sid{i}{\beta ,\gamma} \neq \emptyset$ also holds. By (i), $\beta \lqs_{i'} \gamma$ for every minimal element $i'$ of $\sid{i}{\beta ,\gamma}$. Then, $\beta \lqs_{i'} \gamma$ holds by $\forall$-condition for every such a minimal element $i'$. Otherwise, $\gamma$ would have an index $i'$ with $i < i'$, contradiction. By repeating this argument, we can show that $\beta \lqs_{\infty} \gamma$, so $k < h$ or $k=h$.

By (iii), $\alpha \lqs_{j'} \beta$ for every minimal element $j'$ of $\sid{i}{\alpha ,\beta}$. If $\alpha \lqs_{j'} \beta$ holds by $\exists$-condition for some such a minimal element $j'$, $\beta$ has the index $k$ such that $k \geq j' > i$. Then, $\gamma$ has the index $h$ with $h > i$, contradiction. Therefore, $\alpha \lqs_{j'} \beta$ holds by $\forall$-condition for every minimal element $j'$ of $\sid{i}{\alpha ,\beta}$. By repeating this argument, we have $\alpha \lqs_{\infty} \beta$, so $\alpha \lqs_{\infty} \gamma$ by IH.
\\[10pt]
(\textbf{Case 2})
Suppose that $\sid{i}{\alpha ,\gamma} \neq \emptyset$ holds. For a given minimal element $k'$ of $\sid{i}{\alpha ,\gamma}$, we show that $\alpha \lqs_{k'} \gamma$ holds.
\\[10pt]
(\textbf{Case 2.1})
$k'$ is a minimal element of both $\sid{i}{\alpha ,\beta}$ and $\sid{i}{\beta ,\gamma}$: By (i) and (iii), we have $\alpha \lqs_{k'} \beta \lqs_{k'} \gamma$, so $\alpha \lqs_{k'} \gamma$ by IH.
\\[10pt]
(\textbf{Case 2.2})
Neither $\sid{i}{\alpha ,\beta}$ nor $\sid{i}{\beta ,\gamma}$ includes $k'$ as a minimal element: In this case, both $\sid{i}{\alpha ,\beta}$ and $\sid{i}{\beta ,\gamma}$ are always non-empty. Without loss of generality, we may assume that $k'$ is an index of $\alpha$. Since $k'$ is not a minimal element of $\sid{i}{\alpha ,\beta}$, there is an index $h'$ of $\beta$ such that $h'>i$, $h' \in \sid{i}{\alpha ,\beta}$ and $k' > h'$ hold. Consider a minimal one $h'$ among such indices. Then, $h'$ is also a minimal element of $\sid{i}{\beta ,\gamma}$, so $\alpha \lqs_{h'} \beta \lqs_{h'} \gamma$ by (i) and (iii). Therefore, we have $\alpha \lqs_{h'} \gamma$ by IH.

Since neither $\alpha$ nor $\gamma$ has $h'$ as an index, $\alpha \lqs_{h'} \gamma$ holds by $\forall$-condition, that is, $\alpha \lqs_{h^*} \gamma$ holds for every minimal element $h^*$ of $\sid{h'}{\alpha ,\gamma}$. Then, $k'$ must be one of these minimal elements, otherwise $k'$ is not minimal in $\sid{i}{\alpha ,\gamma}$.
\\[10pt]
(\textbf{Case 2.3})
$k'$ is a minimal element of $\sid{i}{\alpha ,\beta}$, but $k'$ is not a minimal element of $\sid{i}{\beta ,\gamma}$: In this case, we always have $\sid{i}{\beta ,\gamma} = \emptyset$. By (iii), $\alpha \lqs_{k'} \beta$ holds. On the other hand, $\beta \lqs_{k'} \gamma$ by (ii), so $\alpha \lqs_{k'} \gamma$ by IH.
\\[10pt]
(\textbf{Case 2.4})
$k'$ is a minimal element of $\sid{i}{\beta ,\gamma}$, but $k'$ is not a minimal element of $\sid{i}{\alpha ,\beta}$: By a similar argument to the one in (\textbf{Case 2.3}), one can verify that $\alpha \lqs_{k'} \gamma$ holds.


\paragraph{Proof of 2.}
By induction on $\omega \cdot l(\alpha ) + \#\sid{i}{\alpha }$. When $\alpha \in A$ holds, the sublemma obviously holds.
\\[10pt]
(\textbf{Induction Step, Case 1})
$\alpha$ is of the form $(k, \alpha ')$: We have $\alpha \lqs_{\infty} \alpha$ because $\alpha ' \lqs_{k} \alpha '$ by IH.

Suppose that $i \in I$ holds. For a given $\delta$ with $\delta \subset_{i} \alpha$, we have both $\delta \lqs_{i} \alpha$ and $\alpha \not\lqs_{i} \delta$ by IH. Moreover, $\alpha \lqs_{i'} \alpha$ holds for every minimal element $i'$ of $\sid{i}{\alpha}$ by IH again. Therefore, $\alpha \lqs_{i} \alpha$ holds.

Next, we show by induction on $l(\beta)$ that if $\alpha \subset_i \beta$ holds then $\alpha \lqs_{i} \beta$ and $\beta \not\lqs_{i} \alpha$.
\\[10pt]
(\textbf{Case 1.1})
Suppose that $\beta$ is of the form $(h,\beta ')$. Since both $\alpha \subset_i \beta$ and $\alpha \lqs_i \alpha$ hold, we have $\alpha \lqs_i \beta$. To verify that $\beta \not\lqs_i \alpha$ holds, we assume that $\beta \lqs_i \alpha$ holds and deduce a contradiction. When $\beta \lqs_i \alpha$ holds by $\exists$-condition, then there is a gqod $\gamma$ such that $\gamma \subset_i \alpha$ and $\beta \lqs_i \gamma$ hold. By IH of main induction, $\gamma \lqs_i \alpha$ and $\alpha \not\lqs_i \gamma$ hold. On the other hand, we have $\alpha \lqs_i \gamma$ by the transitivity, contradiction. When $\beta \lqs_i \alpha$ holds by $\forall$-condition, then $\alpha \not\lqs_i \gamma$ for every $\gamma$ with $\gamma \subset_i \beta$. Therefore, we have a contradiction if we take such a $\gamma$ as $\alpha$.
\\[10pt]
(\textbf{Case 1.2})
Suppose that $\beta$ is of the form $\beta_1 \# \cdots \# \beta_n$ with $n >1$. Since $\alpha \subset_i \beta$ holds, we have $\alpha \subset_i \beta_l$ for some $l$ with $1 \leq l \leq n$. By IH of sub-induction, $\alpha \lqs_i \beta_l$ and $\beta_l \not\lqs_i \alpha$ hold. One can see that $\beta_l \lqs_i \beta$ holds by definition, so $\alpha \lqs_i \beta$ and $\beta \not\lqs_i \alpha$ hold by the transitivity.
\\[10pt]
(\textbf{Induction Step, Case 2})
$\alpha$ is of the form $\alpha_1 \# \cdots \# \alpha_n$ with $n>1$: In this case, one easily see by IH that for every $i \in \tilde{I}$, $\alpha \lqs_i \alpha$ holds. We show by induction on $l(\beta)$ that if $\alpha \subset_i \beta$ holds then $\alpha \lqs_{i} \beta$ and $\beta \not\lqs_{i} \alpha$.
\\[10pt]
(\textbf{Case 2.1})
Suppose that $\beta$ is of the form $(h,\beta ')$. We have $\alpha_l \lqs_i \alpha$ for every $l\;(1 \leq l \leq n)$ by definition. Since $\alpha \subset_i \beta$ holds and each $\alpha_l$ is a connected gqod, $\alpha_l \lqs_i \beta$ holds for every $l\;(1 \leq l \leq n)$ by $\exists$-condition. Therefore, $\alpha \lqs_i \beta$ holds.

Assume that $\beta \lqs_i \alpha$ holds. We deduce a contradiction from this assumption. By definition, $\beta \lqs_i \alpha_l$ holds for some $l\;(1 \leq l \leq n)$. When $\beta \lqs_i \alpha_l$ holds by $\exists$-condition, then there is a gqod $\delta$ such that $\delta \subset_i \alpha_l$ and $\beta \lqs_i \delta$ hold. By IH of main induction, $\alpha_l \not\lqs_i \delta$ holds. On the other hand, we have $\alpha_l \lqs_i \delta$ by the transitivity, contradiction. When $\beta \lqs_i \alpha_l$ holds by $\forall$-condition, then for every $\gamma$ with $\gamma \subset_i \beta$, $\gamma \lqs_i \alpha_l$ and $\alpha_l \not\lqs_i \gamma$ hold. We have a contradiction if we take such a $\gamma$ as $\alpha$.
\\[10pt]
(\textbf{Case 2.2})
Suppose that $\beta$ is of the form $\beta_1 \# \cdots \# \beta_m$ with $m >1$. By a similar argument to the one of (\textbf{Case 1.2}), one can see that $\alpha \lqs_i \beta$ and $\beta \not\lqs_i \alpha$ hold.
\fi
\end{proof}

\begin{rem}
Let us comment on the definitions of $i$-sections (Definition \ref{subset}) and $\lqs_i$ (Definition \ref{less}).
\begin{itemize}

\item If we replace ``$i \not\leq j$" in Definition \ref{subset}.2.(c) with ``$i > j$", then we obtain the definition of $i$-sections in \cite{OT1987,takeuti1987}.

\item We defined $\qd{I,A}$ by taking $I$ as not a well \textit{quasi} ordering but a well \textit{partial} ordering. The antisymmetry of $I$ is needed to verify the transitivity of $\lqs_{\infty}$. In addition, note that the proof for the lemma above neither depends on the well quasi orderedness of $I$ nor the one of $A$. If we take $I$ as a well ordering, then in Definition \ref{less}.5.($\forall$) there always exists the \textit{least} element of $\sid{i}{\alpha ,\beta}$ if it is non-empty. This is the only difference of our orderings $\lqs_i$ from the ones in \cite{OT1987,takeuti1987}.

\end{itemize}
\end{rem}

The central quasi ordering $\lll$ in this paper is a generalization of the linear ordering appeared in \cite{okada1988}, which satisfies the monotonicity property. We will show in Section \ref{app} that our ordering $\lll$ satisfies the monotonicity property as well (cf. Lemma \ref{mono}).
\begin{defn}[Ordering $\alpha \lll \beta$]
For any two terms $\alpha , \beta$ of $\qd{I,A}$, $\alpha \lll \beta$ holds if and only if $\alpha \lqu_i \beta$ holds for any element $i$ of $\tilde{I}$.
\end{defn}
Let $\lll^=$ be the reflexive closure of $\lll$. It is obvious that $(\qd{I,A} ,\lll^{=})$ is a quasi ordering and that for any $\alpha$ and $\beta$, $\alpha \lll \beta$ holds if and only if both $\alpha \lll^= \beta$ and $\beta \not\lll^= \alpha$ holds.

Next, we restrict the pre-domains $\qd{I,A}$ to the \textit{path comparable tree}-domain $\lt{I,A}$, since there is a counterexample for the well quasi orderedness of $(\qd{I,A} ,\lll^= )$. In fact, we have a counterexample for its \textit{weak} well quasi orderedness.
\begin{exa}
A counterexample for the \textit{weak} well quasi orderedness of $(\qd{I,A}, \lll^= )$ is as follows: Let $I$ be $\{ 0, a_1 , a_2 , b_1, b_2\}$ and $A$ be $\{ 0 \}$, where $0 <_I a_1 <_I b_1$, $0 <_I a_2 <_I b_2$ and for any $c \in \{ a_1 ,b_1\}$ and $d \in \{ a_2 ,b_2 \}$, neither $c \leq_I d$ nor $d \leq_I c$ holds.
Then, we have the following infinite sequence.
\begin{center}
$(a_1 ,(b_2 ,0 )) \ggg ( a_1 , ( a_2 , (b_1 ,0 ))) \ggg (a_1 , ( a_2 , ( a_1 , ( b_2 , 0 )))) \ggg ( a_1 , (a_2 , (a_1 , (a_2 ,(b_1 , 0 ))))) \ggg \ldots$
\end{center}
\end{exa}

This kind of counterexamples is blocked if we restrict the pre-domains $\qd{I,A}$ to the path comparable tree-domains $\lt{I,A}$.
In the rest of this subsection and the next subsection, we assume that pre-domains $\qd{I,A}$ satisfy the following conditions: $I \cap A = \emptyset$ and $a <_{I} i$ holds for any $i \in I$ and any $a \in A$.
In addition, we identify a connected term $\alpha$ of $\qd{I,A}$ with a labeled finite tree $((T, \leq_T), l_T)$ in the manner of Example \ref{exazero}. For any quasi ordering $(A ,\leq_A)$ and any two elements $a_1 ,a_2$ of $A$, we say $a_1$ is \textit{comparable with} $a_2$ if and only if $a_1 \leq_A a_2$ or $a_2 \leq_A a_1$ holds.
\begin{defn}[Path comparable tree-domains $\lt{I,A}$ of $\qd{I,A}$, cf. \cite{DT2003}]
The \textit{path comparable tree-domain} $\lt{I,A}$ of $\qd{I,A}$ is defined as follows:
\begin{enumerate}

\item If $a \in A$ holds, then $a$ is a connected term of $\lt{I,A}$.

\item Let $\alpha_1 ,\ldots ,\alpha_n$ be connected terms of $\lt{I,A}$ with $\alpha_1 = ((T_1 ,\leq_1),l_1),\ldots ,\alpha_n = ((T_n ,\leq_n),l_n)$. If $i$ is comparable with $l_k (a)$ for any $k$ $(1 \leq k \leq n)$ and any $a \in T_k$, then $(i ,\alpha_1 \# \cdots \# \alpha_n)$ is a connected term of $\lt{I,A}$.

\item If $\alpha_1 ,\ldots ,\alpha_n$ are connected terms of $\lt{I,A}$, then $\alpha_1 \# \ldots \# \alpha_n$ is an unconnected term of $\lt{I,A}$.

\end{enumerate}

\end{defn}

A \textit{generalized quasi ordinal diagram system} is a pair $(\lt{I,A} , \lll^= )$ of a path comparable tree-domain $\lt{I,A}$ and the quasi ordering $\lll^=$ on this domain. We often abbreviate $(\lt{I,A} , \lll^= )$ as $\lt{I,A}$. Hereafter, we call a connected term (resp. an unconnected term) of $\lt{I,A}$ a \textit{connected gqod} (resp. an \textit{unconnected gqod}). For $\lt{I,A}$, we have the following lemma by induction on $l (\alpha )$.
\begin{lem}\label{cpr}
For any element $(i, \alpha)$ of $\lt{I,A}$, $\alpha \lqu_j (i, \alpha)$ holds for any element $j$ of $I$ with $j \leq i$.
\end{lem}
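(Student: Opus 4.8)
The plan is to prove the statement by induction on $l(\alpha)$, refined by an inner induction on $\#\sid{j}{\alpha,(i,\alpha)}$ that handles all indices $j \leq_I i$ uniformly; this mirrors the double induction $\langle l , \#\sid{i}{\cdot}\rangle$ by which $\lqs_i$ is defined in Definition \ref{less}, and throughout I would cite the transitivity and the reflexivity sublemma of Lemma \ref{qo}. The first observation is that the immediate subterm is always an $i$-section of $(i,\alpha)$: by Definition \ref{subset}.2(a), as the outer label is $i$, the disjunct $\alpha = \alpha$ gives $\alpha \subset_i (i,\alpha)$. Feeding this into the reflexivity sublemma of Lemma \ref{qo} settles the case $j = i$ at once, namely $\alpha \lqs_i (i,\alpha)$ and $(i,\alpha) \not\lqs_i \alpha$, i.e.\ $\alpha \lqu_i (i,\alpha)$; this case also serves as the base of the inner induction.

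For $j <_I i$ I would split on the shape of $\alpha$. If $\alpha \in A$, then $\alpha \lqu_j (i,\alpha)$ is immediate from Definition \ref{less}.2. If $\alpha \equiv \alpha_1 \# \cdots \# \alpha_n$ is unconnected, Definition \ref{less}.3(a) reduces the goal to showing $\alpha_k \lqs_j (i,\alpha)$ and $(i,\alpha) \not\lqs_j \alpha_k$ for each component $\alpha_k$; if $\alpha$ is connected the same goal is set with $\alpha_k$ read as $\alpha$ itself. The positive half is obtained from the $\forall$-condition of Definition \ref{less}.5. Since $j <_I i$, Definition \ref{subset}.2(b) makes the $j$-sections of $(i,\alpha)$ coincide with those of $\alpha$, so any $\gamma \subset_j \alpha_k$ is also a $j$-section of $(i,\alpha)$, and Lemma \ref{qo} yields $\gamma \lqs_j (i,\alpha)$ and $(i,\alpha) \not\lqs_j \gamma$; this discharges the first clause of the $\forall$-condition. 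As $i$ is an index of $(i,\alpha)$ with $i >_I j$, the set $\sid{j}{\alpha_k,(i,\alpha)}$ is non-empty, so it remains to verify $\alpha_k \lqs_{j'} (i,\alpha)$ for every minimal $j' \in \sid{j}{\alpha_k,(i,\alpha)}$.

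This is where the restriction to the path comparable domain $\lt{I,A}$ is essential. Every index of a term is one of its $I$-labels, and since $(i,\alpha) \in \lt{I,A}$ the label $i$ is comparable with every label occurring in $\alpha$, hence in each $\alpha_k$; combined with $i \in \sid{j}{\alpha_k,(i,\alpha)}$, this forces each minimal $j'$ to satisfy $j' \leq_I i$ (anything $>_I i$ cannot be minimal, and no element is incomparable with $i$). Therefore $\alpha_k \lqs_{j'} (i,\alpha)$ is a strictly earlier instance of the very statement being proved—$\#\sid{j'}{\alpha_k,(i,\alpha)} < \#\sid{j}{\alpha_k,(i,\alpha)}$ because $j' >_I j$—and the inner induction applies, with $j' = i$ resolved as in the base case via the $\exists$-condition (using $\alpha \subset_i (i,\alpha)$ together with the fact, immediate from Definition \ref{less}.3(b) and reflexivity, that a component is $\lqs_i$ the whole term).

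The strictness $(i,\alpha) \not\lqs_j \alpha_k$ is proved simultaneously. If $(i,\alpha) \lqs_j \alpha_k$ held by the $\exists$-condition, there would be $\gamma \subset_j \alpha_k$ with $(i,\alpha) \lqs_j \gamma$, whence $\alpha_k \lqs_j (i,\alpha) \lqs_j \gamma$ by the positive half and transitivity, contradicting $\alpha_k \not\lqs_j \gamma$ from Lemma \ref{qo}; if it held by the $\forall$-condition, the index clause recurses through minimal indices $\leq_I i$ down to level $i$, where one must refute $(i,\alpha) \lqs_i \alpha_k$ directly. For $\alpha_k = \alpha$ this is the reflexivity-sublemma output already recorded; for a proper component $\alpha_k$ the point is that $\alpha \subset_i (i,\alpha)$ while $\alpha_k \lqs_i \alpha$, so the conjunct $\alpha_k \not\lqs_i \alpha$ demanded by the $\forall$-condition fails. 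I expect the main obstacle to be exactly this unconnected case together with its strictness half: keeping the simultaneous induction well-founded while the $\forall$-condition pushes the index upward toward $i$, and ensuring—via path comparability and the identification of indices with labels—that this upward movement never leaves the range $\{\, j' : j' \leq_I i \,\}$ in which the induction hypothesis is available. The constant case and the connected positive direction are routine once these points are settled.
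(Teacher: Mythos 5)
Your proposal is correct and takes essentially the same approach as the paper's own proof: an outer induction on $l(\alpha)$ with a subinduction on $\# \sid{j}{\alpha , (i,\alpha)}$, the case $j = i$ settled via $\alpha \subset_i (i,\alpha)$ and the reflexivity sublemma of Lemma \ref{qo}, and path comparability of $(i,\alpha)$ used exactly as in the paper to force every minimal element of the index set to be $\leq i$, so that the subinduction hypothesis remains applicable as the $\forall$-condition pushes the index upward. The only (harmless) deviation is cosmetic: in the unconnected case you compare each component $\alpha_k$ with $(i,\alpha)$ directly, whereas the paper interpolates $(i,\alpha_k)$, first obtaining $\alpha_k \lqu_j (i,\alpha_k)$ from the outer induction hypothesis and then proving $(i,\alpha_k) \lqs_j (i,\alpha)$ by the same subinduction.
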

\if0
\begin{proof}
By induction on $l (\alpha )$. When $\alpha$ is an element of $A$, the lemma obviously holds.
\\[10pt]
(\textbf{Induction Step, Case 1})
Suppose that $\alpha$ is of the form $\alpha_1 \# \cdots \# \alpha_n$ with $n >1$. When $j=i$ holds, we have $\alpha \lqu_j (i,\alpha)$ by the fact (2) in the proof of Lemma \ref{qo}.

Consider an element $j$ of $I$ with $j<i$. It suffices to show that $\alpha_k \lqu_j (i, \alpha )$ holds for any $k$ with $1 \leq k \leq n$. We have $\alpha_k \lqu_j (i, \alpha_k)$ by IH, so we verify by subinduction on $\# \sid{j}{ (i ,\alpha_k ), (i, \alpha ) }$ that $(i , \alpha_k) \lqs_j (i, \alpha)$ holds. Take a gqod $\beta$ with $\beta \subset_j (i, \alpha_k)$. Then, $\beta \subset_j (i, \alpha )$ also holds, so we have $\beta \lqu_j (i, \alpha )$ by the fact (2) in the proof of Lemma \ref{qo}.

Take an arbitrary minimal element $j'$ of $\sid{j}{ (i, \alpha_k ), (i, \alpha)}$. Note that $\sid{j}{ (i,\alpha_k), (i, \alpha) } \neq \emptyset$ holds because $i \in \sid{j}{ (i,\alpha_k), (i, \alpha) }$ holds. In addition, we have $j' \leq i$. Then, we also have $(i,\alpha_k ) \lqs_{j'} (i ,\alpha)$ by IH of subinduction.
\\[10pt]
(\textbf{Induction Step, Case 2})
Suppose that $\alpha$ is of the form $(k, \gamma)$. Note that we have either $k = i$, $k < i$ or $i < k$ because $(i, \alpha ) \in \lt{I,A}$ holds. When $j = i$ holds, we have $\alpha \lqu_j (i, \alpha )$ by the fact (2) in the proof of Lemma \ref{qo}. Assume that $j < i$ holds. We use subinduction on $\# \sid{j}{ \alpha , (i, \alpha )}$. Take a gqod $\beta$ with $\beta \subset_j \alpha$, then we have $\beta \lqu_j (i,\alpha )$ because $\beta \subset_j (i, \alpha )$ holds.

Next, we verify that $\alpha \lqs_{j'} (i, \alpha )$ holds for any minimal element $j'$ of $\sid{j}{ \alpha ,(i,\alpha ) }$. If either $k = i$ or $k < i$ holds, we immediately have $j' \leq i$ because $j' \leq k$ or $j' \leq i$ holds. Then, $\alpha \lqu_{j'} (i, \alpha )$ holds by IH of subinduction. Suppose that $i < k$ holds. Because $(i, \alpha) \in \lt{I,A}$ holds, either $j' = i$, $j' < i$ or $i < j'$ holds, but we cannot have $i < j'$ otherwise $j'$ would be a non-minimal element of $\sid{j}{ \alpha ,(i,\alpha ) }$. Therefore, we have $j' \leq i$ and $\alpha \lqu_{j'} (i, \alpha )$ by IH of subinduction.

Finally, we show that $(i, \alpha ) \nless^{\mathsf{q}}_j \alpha$ holds. Assume that we have $(i, \alpha ) \lqs_j \alpha$ by $\forall$-condition. Then, $(i,\alpha ) \lqs_{j'} \alpha$ holds for any minimal element $j'$ of $\sid{j}{\alpha , (i, \alpha )}$. This contradicts IH of subinduction. Assume that we have $(i, \alpha ) \lqs_j \alpha$ by $\exists$-condition. Then, we have a gqod $\beta$ such that $(i, \alpha) \lqs_j \beta \subset_j \alpha$ holds, but we also have $\beta \lqu_j (i, \alpha)$ because $\beta \subset_j (i, \alpha )$ holds. Contradiction.
\end{proof}
\fi




\subsection{Well quasi ordering proof for $(\lt{I,A}, \lll^= )$ via Dershowitz and Tzameret's tree embedding theorem}\label{DT}
We first recall Dershowitz-Tzameret's tree embedding with gap condition (cf. \cite[Definition 3.3]{DT2003}).
\begin{defn}[Tree Embedding $\hra$]
For any two connected gqod's $\alpha = ((T_1,\leq_1),l_1)$ and

\noindent $\beta = ((T_2,\leq_2),l_2)$, $\alpha \hra \beta$ holds if and only if there is an injection $\iota :T_1 \to T_2$ such that
\begin{enumerate}

\item (Node condition 1) for any $a \in T_1$, $l_1 (a) \leq_{I} l_2 (\iota (a) )$ holds,

\item (Node condition 2) for any $a_1,a_2 \in T_1$, $\iota (a_1 \wedge a_2 ) = \iota (a_1) \wedge \iota (a_2)$ holds,


\item (Edge condition) if $a_1$ is an element of $T_1$ with its immediate lower node $a_2$ in $T_1$, then for any $b \in T_2$ with $\iota (a_2) <_2 b <_2 \iota (a_1)$, $l_1 (a_1) \leq_I l_2 (b)$ holds, and   

\item (Root condition) if $a$ is the root of $T_1$, then for any $b \in T_2$ with $b <_2 \iota (a)$, $l_1 (a) \leq_I l_2 (b)$ holds.

\end{enumerate}
\end{defn}
\begin{exa}
The following is an example of Dershowitz-Tzameret's tree embedding (cf. \cite[p.87]{DT2003}).
\[
\begin{xy}
(0,5) *=0{\bullet}*+!U{3} ="A1", (5,13) *=0{\bullet}*+!LU{1} ="A2", (-5,10) *=0{\bullet}*+!RU{7} ="A3", (0,15) *=0{\bullet}*+!D{11} ="A4", (-5,20) *=0{\bullet}*+!RD{2} ="A5",
(40,0) *=0{\bullet}*+!U{5} ="B1", (35,5) *=0{\bullet}*+!RU{4} ="B2", (45,5) *=0{\bullet}*+!LU{0} ="B3", (30,10) *=0{\bullet}*+!RU{9} ="B4", (40,10) *=0{\bullet}*+!LU{2} ="B5", (50,10) *=0{\bullet}*+!LU{0} ="B6", (37,13) *=0{\bullet}*+!D{0} ="B51", (43,15) *=0{\bullet}*+!L{1} ="B52", (46,13) *=0{\bullet}*+!L{0} ="B53", (40,19) *=0{\bullet}*+!D{3} ="B54", (53,13) *=0{\bullet}*+!D{0} ="B61", (59,13) *=0{\bullet}*+!D{0} ="B62", (26,14) *=0{\bullet}*+!RU{7} ="B41", (22,23) *=0{\bullet}*+!D{6} ="B42", (30,21) *=0{\bullet}*+!L{11} ="B43", (27,24) *=0{\bullet}*+!D{0} ="B44",
\ar @{-}"A1";"A2" \ar @{-}"A1";"A3" \ar @{-}"A3";"A4" \ar @{-}"A3";"A5"
\ar @{-}"B1";"B2" \ar @{-}"B1";"B3" \ar @{-}"B2";"B4" \ar @{-}"B2";"B5" \ar @{-}"B3";"B6" \ar @{-}"B5";"B51" \ar @{-}"B5";"B52" \ar @{-}"B5";"B53" \ar @{-}"B52";"B54" \ar @{-}"B6";"B61" \ar @{-}"B6";"B62" \ar @{-}"B4";"B41" \ar @{-}"B41";"B42" \ar @{-}"B41";"B43" \ar @{-}"B43";"B44"
\ar @{.>}"A1";"B2" \ar @{.>}"A2";"B54" \ar @{.>}"A3";"B41" \ar @{.>}"A4";"B43" \ar @{.>}"A5";"B42"
\end{xy}
\]
\end{exa}
\if0
\begin{exa}
Consider $\lt{I^* ,A^*}$, where $I^*$ and $A^*$ were defined in Example \ref{exaone}. The first injection below, which satisfies the node conditions 1 and 2, does \textit{not} show that the left gqod is embedded into the right gqod in Dershowitz-Tzameret's sense, because neither the edge condition nor the root condition is satisfied. By modifying the first example as the second, we obtain an embedding of the left tree into the right tree.
\[
\begin{xy}
(0,0) *+[Fo:<2.5mm>]{\omega '} ="A1", (-7,7) *+[Fo:<2.5mm>]{2} ="A2", (7,7) *+[Fo:<2.5mm>]{3'} ="A3", (-7,14) *+[Fo:<2.5mm>]{0} ="A4", (7,14) *+[Fo:<2.5mm>]{0} ="A5",
(35,-7) *+[Fo:<2.5mm>]{0} ="B0", (35,0) *+[Fo:<2.5mm>]{\omega '} ="B1", (28,7) *+[Fo:<2.5mm>]{1} ="B2", (42,7) *+[Fo:<2.5mm>]{3'} ="B3", (28,14) *+[Fo:<2.5mm>]{4} ="B4", (42,14) *+[Fo:<2.5mm>]{0} ="B5", (28,21) *+[Fo:<2.5mm>]{0} ="B6",
\ar @{-}"A1";"A2" \ar @{-}"A1";"A3" \ar @{-}"A2";"A4" \ar @{-}"A3";"A5"
\ar @{-}"B1";"B2" \ar @{-}"B1";"B3" \ar @{-}"B2";"B4" \ar @{-}"B3";"B5" \ar @{-}"B0";"B1" \ar @{-}"B4";"B6"
\ar @/^/@{.>}"A1";"B1" \ar @/^/@{.>}"A2";"B4" \ar @/^/@{.>}"A3";"B3" \ar @/^/@{.>}"A4";"B6" \ar @/^/@{.>}"A5";"B5"
\end{xy}
\qquad\qquad
\begin{xy}
(0,0) *+[Fo:<2.5mm>]{\omega '} ="A1", (-7,7) *+[Fo:<2.5mm>]{2} ="A2", (7,7) *+[Fo:<2.5mm>]{3'} ="A3", (-7,14) *+[Fo:<2.5mm>]{0} ="A4", (7,14) *+[Fo:<2.5mm>]{0} ="A5",
(35,-7) *+[Fo:<2.5mm>]{\omega '} ="B0", (35,0) *+[Fo:<2.5mm>]{\omega '} ="B1", (28,7) *+[Fo:<2.5mm>]{3} ="B2", (42,7) *+[Fo:<2.5mm>]{3'} ="B3", (28,14) *+[Fo:<2.5mm>]{4} ="B4", (42,14) *+[Fo:<2.5mm>]{0} ="B5", (28,21) *+[Fo:<2.5mm>]{0} ="B6",
\ar @{-}"A1";"A2" \ar @{-}"A1";"A3" \ar @{-}"A2";"A4" \ar @{-}"A3";"A5"
\ar @{-}"B1";"B2" \ar @{-}"B1";"B3" \ar @{-}"B2";"B4" \ar @{-}"B3";"B5" \ar @{-}"B0";"B1" \ar @{-}"B4";"B6"
\ar @/^/@{.>}"A1";"B1" \ar @/^/@{.>}"A2";"B4" \ar @/^/@{.>}"A3";"B3" \ar @/^/@{.>}"A4";"B6" \ar @/^/@{.>}"A5";"B5"
\end{xy}
\]
\end{exa}
\fi

\begin{thm}[Theorem 3.1 in \cite{DT2003}]\label{dt}
Let $\mathsf{GQ}^{ctd}(I,A)$ be the set of all connected gqod's from $\lt{I,A}$. Then, $(\mathsf{GQ}^{ctd}(I,A), \hra)$ is a well quasi ordering.
\end{thm}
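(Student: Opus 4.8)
The statement is exactly the tree-embedding theorem with gap condition that Dershowitz and Tzameret establish in \cite{DT2003}, so one legitimate option is simply to check that connected gqod's, read as $I$-labelled finite trees with $A$-labelled leaves, form an instance of their labelled-tree setting and to quote their Theorem~3.1. To indicate how the result itself would be proved, I would run a Nash--Williams minimal bad sequence argument. Assuming $(\mathsf{GQ}^{ctd}(I,A),\hra)$ is not a well quasi ordering, fix a bad sequence $\alpha_0,\alpha_1,\dots$ (so $\alpha_n\not\hra\alpha_m$ whenever $n<m$) that is minimal in the sense that, for each $n$, $\alpha_n$ has the least possible $l(\alpha_n)$ among all terms extending $\alpha_0,\dots,\alpha_{n-1}$ to a bad sequence.

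First I would clear the base case: each connected $\alpha_n$ is either an element of $A$ or of the form $(i,\alpha_1\#\cdots\#\alpha_k)$ with $i\in I$. If infinitely many $\alpha_n$ lay in $A$, then since $(A,\leq_A)$ is a well quasi ordering we would obtain $n<m$ with $\alpha_n\leq_A\alpha_m$, and the single-node map witnesses $\alpha_n\hra\alpha_m$, contradicting badness; so after passing to a tail every $\alpha_n$ has a root label $i_n\in I$ together with a nonempty list of immediate subtrees. The aim is then to recombine the (well quasi ordered) labels $i_n$ with the (well quasi ordered) lists of immediate subtrees, via Higman's lemma on finite sequences and Kruskal's tree theorem on the remaining skeleton, so as to manufacture $\alpha_n\hra\alpha_m$ for some $n<m$.

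The main obstacle is precisely the Edge and Root conditions, which have no counterpart in plain Kruskal: an immediate subtree need not embed into its own parent, because the root of that subtree may carry an $I$-label strictly above the label it sits over (labels along a branch need not decrease, as Example~\ref{exaone} already shows). Hence the naive ``the subtrees form a well quasi ordering, now glue'' step fails, and the gap nodes must be controlled. This is where the restriction to the path comparable domain $\lt{I,A}$ does the essential work: by construction every node label is comparable with all labels occurring in its subtree, so along each branch the $I$-labels form a chain, and locally the situation reduces to gap embedding over a linearly ordered label set.

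To exploit this I would argue by induction on the well partial ordering $(I,\leq_I)$, which is legitimate since a well partial ordering is well founded. Decompose each $\alpha_n$ at nodes carrying a minimal label: the subtrees hanging above such a cut use only labels above the cut, so they constitute a strictly smaller instance of the problem and form a well quasi ordering by the induction hypothesis; Higman's lemma promotes finite lists of these subtrees to a well quasi ordering, and the skeleton of cut nodes becomes a Kruskal tree decorated by a well quasi ordered alphabet, so that some $\alpha_n$ embeds into some $\alpha_m$ with $n<m$. The gap condition at the cut labels is automatic, since every label dominates the minimum, while the gap conditions inside the higher subtrees are supplied by the induction hypothesis. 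The genuinely delicate point — matching the $I$-chains along branches while keeping the intervening gap labels large, in particular when $I$ has several incomparable minimal elements — is the technical core of \cite{DT2003} and the step I expect to be hardest; for it we rely on Dershowitz and Tzameret's theorem.
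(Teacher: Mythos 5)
Your proposal takes the same route as the paper: the paper offers no proof of this statement at all, but simply imports it as Theorem~3.1 of \cite{DT2003} after setting up connected gqod's as path comparable $I$-labelled trees, which is exactly your first option of checking the instance and quoting the theorem. Your additional minimal-bad-sequence sketch is a reasonable outline of how such gap-embedding results are proved, and you correctly identify that the hard core (handling incomparable labels via path comparability) is delegated to Dershowitz--Tzameret, so nothing in your proposal diverges from or conflicts with the paper.
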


Next, we extend this tree embedding to forests, namely, unconnected gqod's and show that $\alpha \lll^= \beta$ holds whenever $\alpha$ is embedded into $\beta$ in the sense of this forest embedding.
\begin{defn}[Forest Embedding $\hra^{\#}$]
For any two terms $\alpha$ and $\beta$ of $\lt{I,A}$ with $\alpha \equiv \alpha_1 \# \cdots \# \alpha_n$ and $\beta \equiv \beta_1 \# \cdots \# \beta_m$ ($n ,m >0$), $\alpha \hra^{\#} \beta$ holds if and only if $n \leq m$ and there is a permutation $p$ of $\{ 1,\ldots , n\}$ such that $\alpha_i \hra \beta_{p(i)}$ for any $i$ with $1 \leq i \leq n$.
\end{defn}

\begin{prop}\label{embed}
For every $\alpha ,\beta \in \lt{I,A}$, if $\alpha \hra^{\#} \beta$ holds, then $\alpha \lll^= \beta$ holds.
\end{prop}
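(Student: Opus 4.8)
The plan is to reduce everything to a \emph{Key Lemma} for connected gqod's, namely: if $\alpha$ and $\beta$ are connected and $\alpha \hra \beta$, then $\alpha \lll^= \beta$. Granting this, the Proposition follows by a forest-lifting argument. Writing $\alpha \equiv \alpha_1 \# \cdots \# \alpha_n$ and $\beta \equiv \beta_1 \# \cdots \# \beta_m$, the hypothesis $\alpha \hra^{\#} \beta$ supplies $n \leq m$ and an injection $p$ with $\alpha_k \hra \beta_{p(k)}$, whence $\alpha_k \lll^= \beta_{p(k)}$ and in particular $\alpha_k \lqs_i \beta_{p(k)}$ for every $i \in \tilde{I}$ by the Key Lemma. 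I would then obtain $\alpha \lqs_i \beta$ from clause~3 of Definition~\ref{less} by induction on $n$, using that $\#$ is treated up to permutation: clause~3(b) applies with $\beta_{p(1)}$, and the induction hypothesis for the restriction of $p$ to $\{2,\ldots,n\}$ (which still injects into the components $\{\beta_l : l \neq p(1)\}$) disposes of $\alpha_2 \# \cdots \# \alpha_n$ against $\beta$ with $\beta_{p(1)}$ deleted. If $\alpha = \beta$ we are done by reflexivity of $\lll^=$; otherwise either $n < m$ or some $\alpha_k \neq \beta_{p(k)}$, and the same bookkeeping together with the strict half of the Key Lemma yields $\beta \not\lqs_i \alpha$ for every $i$, hence $\alpha \lll \beta$.

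For the Key Lemma I would argue by the same double induction on $\langle l(\alpha,\beta),\ \#\sid{i}{\alpha,\beta}\rangle$ that underlies Definition~\ref{less}, proving simultaneously the inequality $\alpha \lqs_i \beta$ for all $i \in \tilde{I}$ and its strictness $\beta \not\lqs_i \alpha$ when $\alpha \neq \beta$; the strict half is threaded through the induction exactly as in the reflexivity sublemma of Lemma~\ref{qo}(2). The base case $\beta \in A$ forces $\alpha \in A$ with $\alpha \leq_A \beta$ (since no $I$-label lies $\leq_I$-below an $A$-label), which is clause~1. In the inductive step write $\alpha \equiv (j,\alpha_0)$ and $\beta \equiv (j',\beta_0)$, let $\iota$ be the embedding, and let $r$ be the root of $\alpha$. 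A first useful fact, valid in \emph{every} case, is that $j \leq_I j'$: if $\iota(r)$ is the root of $\beta$ this is node condition~1, and if $\iota(r)$ lies strictly below it then the root of $\beta$ sits below $\iota(r)$, so the root condition gives $l_1(r)=j \leq_I l_2(\text{root of }\beta)=j'$.

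The inductive step then verifies, for each $i$, either the $\exists$- or the $\forall$-condition. Because $\iota$ preserves meets, its whole image lies in the subtree of $\beta$ rooted at $\iota(r)$; when $\alpha$'s image is contained in a proper $i$-section $\beta'' \subset_i \beta$ one has $\alpha \hra^{\#} \beta''$ with $l(\beta'') < l(\beta)$, and the $\exists$-condition follows from the induction hypothesis together with transitivity and the fact that $i$-sections are strictly $\lqs_i$-below the whole term (both from Lemma~\ref{qo}). Otherwise I verify the $\forall$-condition: clause~(a) is vacuous unless $i \leq_I j$ (by Definition~\ref{subset}.2, $\alpha$ has no $i$-section when $i \not\leq_I j$), and when $i \leq_I j$ every $i$-section $\alpha' \subset_i \alpha$ sits below a branch of labels $\geq_I i$ that the edge and root conditions transport into a branch of labels $\geq_I i$ in $\beta$, so $\alpha'$ embeds below $\beta$ and $\alpha' \lqu_i \beta$ by the induction hypothesis on $l(\alpha') < l(\alpha)$; clause~(b) reduces either to comparisons $\alpha \lqs_{j^{*}} \beta$ at minimal $j^{*} \in \sid{i}{\alpha,\beta}$, handled by the inner induction since $\#\sid{j^{*}}{\alpha,\beta} < \#\sid{i}{\alpha,\beta}$, or to $\alpha \lqs_{\infty} \beta$. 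The latter is immediate when $j <_I j'$, and when $j = j'$ it needs $\alpha_0 \lqs_j \beta_0$, which I obtain from $\alpha_0 \hra^{\#} \beta_0$ when $\iota(r)$ is the root of $\beta$, and otherwise from $\alpha \hra^{\#} \beta_0$ combined with $\alpha_0 \lqu_j \alpha$ (Lemma~\ref{cpr}) and transitivity.

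The hard part is the $\forall$-condition of the connected Key Lemma, and within it the strictness $\beta \not\lqs_i \alpha$, which must be carried through the same double induction rather than deduced afterwards. The genuine difficulty is that labels need \emph{not} be monotone along a branch, so an $i$-section of $\alpha$ does not automatically map to a literal $i$-section of $\beta$; it is exactly here that the gap (edge and root) conditions of $\hra$ and the path-comparability built into $\lt{I,A}$ must be used together, the latter guaranteeing that the labels met along any branch form a $\leq_I$-chain so that the minimal elements of $\sid{i}{\cdot}$ are well behaved. I expect the strictness clause to be the most delicate bookkeeping, mirroring the combined ``$\alpha' \lqs_i \beta$ and $\beta \not\lqs_i \alpha'$'' treatment of sections already carried out in Lemma~\ref{qo}(2).
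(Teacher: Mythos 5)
Your overall skeleton---reduce to the connected case, verify the $\exists$/$\forall$ clauses of Definition~\ref{less}.5, thread strictness through the double induction---matches the paper's proof, and your handling of the $\lqs_\infty$ clause is sound (meet preservation, node condition~2, is exactly what guarantees that distinct components of $\alpha_0$ land in distinct components of $\beta_0$ when $\iota(r)$ is the root). But there is a genuine gap at precisely the step you flag as the hard one. You claim that every $i$-section $\alpha' \subset_i \alpha$ ``sits below a branch of labels $\geq_I i$'' that the edge and root conditions transport into a branch of labels $\geq_I i$ in $\beta$, so that $\alpha' \lqu_i \beta$ follows from the induction hypothesis. Two things go wrong. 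First, the induction hypothesis only yields that $\alpha'$ is $\lll^=$-below the subforest of $\beta$ spanned by its image; you must still lift this to a comparison with all of $\beta$, and the only lifting tools you invoke (Lemma~\ref{cpr} and transitivity) operate at index $i$ only when \emph{every} label on the connecting branch is $\geq_I i$. Second, that label bound is false in general: the gap condition bounds the labels between $\iota(i^*)$ (the image of the node governing the section) and the image of a component root $i^*_k$ only by $i_k$, the component's own root label, and path comparability permits $i_k <_I i$. Concretely, take $i_k <_I i$, $\alpha \equiv (i,(i_k,a))$ and $\beta \equiv (i,(i_k,(i_k,a)))$, with $\iota$ sending the $i_k$-node of $\alpha$ to the deeper $i_k$-node of $\beta$: all four embedding conditions hold, the unique $i$-section $(i_k,a)$ of $\alpha$ maps onto the inner subtree, and the branch above its image contains the label $i_k \not\geq_I i$, so no climb via Lemma~\ref{cpr} at index $i$ from the image subtree can reach $\beta$. (The conclusion $(i_k,a) \lqu_i \beta$ is still true, but not by your route.)

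What is missing is the pair of lifting sublemmas the paper proves from Lemma~\ref{cpr} before its main induction: for $i \leq_I j$, if $(i,\gamma) \lll^= \beta$ with $\beta$ connected, then $(i,\gamma) \lll (j,\beta)$, and similarly when $(i,\gamma)$ is $\lll^=$-below a \emph{component} of $\beta$. These upgrade $\lll^=$ to the full strict relation $\lll$---simultaneously at all indices---whenever a root of larger label is added, and they enable the two-stage climb your sketch needs: below $\iota(i^*)$ one propagates $\lll$ through the intermediate nodes using only the gap bound $\geq_I i_k$ (in the counterexample: from $\delta \lll^= (i_k,a)$ to $\delta \lll (i_k,(i_k,a))$ by the second sublemma), and only at $\iota(i^*)$, whose label is $\geq_I i$ by node condition~1, does one specialize to index $i$ via Lemma~\ref{cpr} and continue to the root of $\beta$, where all labels are $\geq_I i$ because $\alpha'$ is an $i$-section of $\alpha$. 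Without stating and proving these lifting sublemmas (or an equivalent device for traversing the low-label region while carrying the all-index relation), the $\forall$-case of your Key Lemma does not go through.
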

\begin{proof}
By Lemma \ref{cpr}, one can prove the following sublemma (1) and (2). For every $\alpha ,\beta \in \lt{I,A}$ and every $i,j \in I$ with $i \leq j$, (1) if $\beta$ is connected and $(i,\alpha ) \lll^= \beta$ holds, then  $(i,\alpha ) \lll (j ,\beta)$ holds, and (2) if $\beta \equiv \beta_1 \# \cdots \# \beta_n \; (n>1)$ and for some $m \; (1 \leq m \leq n)$, $(i,\alpha ) \lll^= \beta_m$ hold, then $(i,\alpha ) \lll (j, \beta) $ holds.
\if0
\textbf{Proof of Sublemmas 1 and 2.}
Sublemma 1 implies Sublemma 2 because $(j,\beta_m ) \lll (j,\beta )$ holds, so we verify Sublemma 1 only. When $k = \infty$ holds or both $k \in I$ and $k \not\leq j$ hold, we easily have $(i,\alpha ) \lqu_{k} (j,\beta )$.
Suppose that $k \in I$ and $k \leq j$ hold. If $(i,\alpha ) \lqs_{k} \beta$ holds by $\exists$-condition, then there is a gqod $\delta$ such that $\delta \subset_k \beta$ and $(i,\alpha ) \lqs_k \delta$ hold. Thus, $(i,\alpha ) \lqs_k (j,\beta )$ holds because $\delta$ is also a $k$-section of $(j,\beta )$. Next, suppose that $(i,\alpha ) \lqs_{k} \beta$ holds by $\forall$-condition. For a given $\gamma \subset_k (i, \alpha )$, we have $\gamma \lqu_{k} (j,\beta )$ because $\beta \lqu_{k} (j,\beta )$ holds by Lemma \ref{cpr}. If $\sid{k}{(i,\alpha ) , (j,\beta )} = \emptyset$ holds, then we are done. Otherwise, consider a minimal element $h$ of $\sid{k}{(i,\alpha ) , (j,\beta )}$.

First, suppose that $h$ is an index of $(i,\alpha )$. In this case, $h$ is also a minimal element of $\sid{k}{(i,\alpha ) , \beta}$. Otherwise, there would be an index $h'$ of $\beta$ such that $h' < h$ and $k < h'$ hold. Then, we have $h' \leq i \leq j$, so $h'$ is also an index of $(j, \beta)$. This contradicts the minimality of $h$. Therefore, we have $(i, \alpha) \lqs_h \beta \lqu_h (j ,\beta )$.

Next, suppose that $h$ is an index of $(j,\beta )$. If $h = j$ holds, then both $\beta \subset_h (j, \beta)$ and $(i,\alpha ) \lqs_h \beta$ hold, so we have $(i,\alpha ) \lqu_h (j,\beta )$ by definition. If $h < j$ holds, then $h$ is an index of $\beta$ and $h$ is a minimal element of $\sid{k}{(i,\alpha ) , \beta}$. We have $(i, \alpha ) \lqu_h (j ,\beta )$ by an argument similar to the case above.
\fi

We prove Proposition \ref{embed} by induction on $l (\alpha ) $. The base case is obvious. If $\alpha$ is unconnected, then the proposition immediately follows from IH. Suppose that $\alpha \equiv (i_1, \alpha ')$ holds. By the definition of $\hra^{\#}$, it suffices to prove the proposition when $\beta$ is a connected gqod $(i_2 ,\beta ')$. In this case, the proposition follows from the following two claims: For any $i \in I$ and any $\delta \subset_i \alpha$, $\delta \lqu_i \beta$ holds, and $\alpha \lqs_{\infty} \beta$ holds.

First, we show that for any $i \in I$ and any $\delta \subset_i \alpha$, $\delta \lqu_i \beta$ holds. Consider a gqod $\delta$ with $\delta \subset_i \alpha$ and put $\delta \equiv \delta_1 \# \cdots \# \delta_n \; (n\geq 1)$. We show that $\delta_k \lqu_i \beta$ for any $k$ with $1 \leq k \leq n$. If $\delta_k \in A$ holds then $\delta_k \lqu_i \beta$ obviously holds, so we assume that $\delta_k$ is of the form $(i_k , \delta_k ')$. Let us denote this outermost occurrence of $i_k$ in $\delta_k$ by $i_k^*$. Suppose that the embedding $\alpha \hra^{\#} \beta$ maps $i_k^*$ to the outermost occurrence of $j_k$ in a sub-gqod $(j_k , \beta_k)$ of $\beta$. We denote this outermost occurrence of $j_k$ in $(j_k , \beta_k)$ by $j_k^*$. Since $\delta_k \hra^{\#} (j_k , \beta_k)$ holds, we have by IH $\delta_k \lll^= (j_k , \beta_k )$ for any $k$. On the other hand, $\alpha$ has a sub-gqod $(i, \delta )$ because $\delta \subset_i \alpha$ holds. Denote this occurrence of $i$ in $\alpha$ by $i^*$ and suppose that the embedding $\alpha \hra^{\#} \beta$ maps $i^*$ to the outermost occurrence $j^*$ of $j$ in a sub-gqod $(j, \gamma_1 \# \cdots \# \gamma_m)$ of $\beta$ ($m \geq 1$), as the following figure.
\[
\begin{xy}
(-15, 0) *=0{\bullet}*+!RU{i^*} ="A",
(-23, 5) *=0{\bullet}*+!RU{i^*_1} ="A1", (-30, 12) ="A11", (-23, 9) *{\delta_1 '} ="A12", (-16, 12) ="A13",
(-15, 5) *{\cdots} ="A2",
(-7, 5) *=0{\bullet}*+!LU{i^*_n} ="A3", (-14, 12) ="A31", (-7, 9) *{\delta_n '} ="A32", (0, 12) ="A33",
(30, 0) *=0{\bullet}*+!LU{j^*} ="B", (0, 20) ="B1", (60, 20) ="B3",
(22, 13) *=0{\bullet}*+!RU{j_1^*} ="C1", (15, 20) ="C11", (29, 20) ="C12", (22, 17) *{\beta_1} ="C13",
(25, 7) *=0{\bullet}*+!LU{h} ="C2",
(30, 15) *{\cdots} ="CD",
(38, 13) *=0{\bullet}*+!LU{j_n^*} ="D1", (31, 20) ="D11", (45, 20) ="D12", (38, 17) *{\beta_n} ="D13",
\ar@{-} "A";"A1" \ar@{-} "A";"A3"
\ar@{-} "A1";"A11" \ar@{-} "A1";"A13" \ar@{-} "A11";"A13"
\ar@{-} "A3";"A31" \ar@{-} "A3";"A33" \ar@{-} "A31";"A33"
\ar@{-} "B";"B1" \ar@{-} "B";"B3" \ar@{-} "B1";"B3"
\ar@{-} "C1";"C11" \ar@{-} "C1";"C12" \ar@{-} "C1";"C2"
\ar@{-} "D1";"D11" \ar@{-} "D1";"D12"
\ar@{.>} "A";"B" \ar@{.>} "A1";"C1" \ar@{.>} "A3";"D1"
\end{xy}
\]
Because of the gap condition of $\hra^{\#}$, we have $\delta_k \lll \gamma_1 \# \cdots \# \gamma_m$ for any $k$ by using sublemmas (1) and (2) above repeatedly. For example, if $(j_1^* ,\beta_1)$ is subsumed by $(h, \eta_1 \# (j_1^* ,\beta_1) \# \eta_2 )$ and $h \neq j^*$, then $h \geq i_1$ holds by the gap condition. Therefore, we have $\delta_1 \lll (h, \eta_1 \# (j_1^* ,\beta_1) \# \eta_2 )$ by the sublemma (2). By repeating this argument, we have $\delta_k \lll \gamma_l$ for some $l$ with $1 \leq l \leq m$. Since $\gamma_1 \# \cdots \# \gamma_m \lqu_i (j,\gamma_1 \# \cdots \# \gamma_m)$ by Lemma \ref{cpr}, $\delta_k \lqu_i (j,\gamma_1 \# \cdots \# \gamma_m)$ holds for any $k$. Therefore, $\delta \lqu_i (j,\gamma_1 \# \cdots \# \gamma_m)$ holds. Note that $i \leq h$ holds for any $h$ occurring in the path from $j^*$ to the root of $\beta$ because of the gap condition and the fact that $\delta$ is an $i$-section of $\alpha$. Therefore, by Lemma \ref{cpr}, we have $\delta \lqu_i \beta$.

By an argument similar to the one above, one can verify that $\alpha \lqs_{\infty} \beta$ holds.
\end{proof}

\begin{thm}\label{wqohra}
The system $(\lt{I,A}, \hra^{\#})$ is a well quasi ordering.
\end{thm}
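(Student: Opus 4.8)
The plan is to recognize $\hra^{\#}$ as the Higman (sequence/multiset) embedding built over the alphabet of connected gqod's, and then to derive the statement from Theorem \ref{dt} by Higman's lemma. Concretely, identifying each unconnected term $\alpha \equiv \alpha_1 \# \cdots \# \alpha_n$ with the finite nonempty sequence $\langle \alpha_1, \ldots, \alpha_n \rangle$ of connected gqod's (a connected term being the length-one case), the domain $\lt{I,A}$ is exactly the set of finite nonempty sequences over $X := \mathsf{GQ}^{ctd}(I,A)$, and $\hra^{\#}$ is the embedding relation induced by $\hra$ on such sequences. Since $\#$ is associative--commutative under the identity relation, we may throughout treat an unconnected term as a multiset of connected gqod's, so that $\hra^{\#}$ is well defined on $\lt{I,A}$.

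First I would record that $(\lt{I,A}, \hra^{\#})$ is a quasi ordering. Reflexivity and transitivity of $\hra$ are implicit in Theorem \ref{dt} (a well quasi ordering is in particular a quasi ordering), and they lift to $\hra^{\#}$ in the routine way: the identity permutation witnesses $\alpha \hra^{\#} \alpha$, while two witnessing injections compose to a witnessing injection, using transitivity of $\hra$ to combine the node-wise comparisons $\alpha_i \hra \beta_{p(i)} \hra \gamma_{q(p(i))}$.

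The heart of the argument is Higman's lemma: if $(X, \hra)$ is a well quasi ordering, then the set $X^{<\omega}$ of finite sequences over $X$, ordered by the subsequence embedding $\hra^{*}$---where $\langle a_1, \ldots, a_n\rangle \hra^{*} \langle b_1, \ldots, b_m\rangle$ iff there is a strictly increasing injection $\psi : \{1, \ldots, n\} \to \{1, \ldots, m\}$ with $a_k \hra b_{\psi(k)}$ for all $k$---is again a well quasi ordering. By Theorem \ref{dt} the alphabet $X = \mathsf{GQ}^{ctd}(I,A)$ is a well quasi ordering, hence so is $(X^{<\omega}, \hra^{*})$; restricting to the nonempty sequences (a sub-quasi-ordering of a well quasi ordering is again one) gives a well quasi ordering on $\lt{I,A}$. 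Finally I would observe that $\hra^{*}$ refines $\hra^{\#}$: any strictly increasing injection witnessing $\hra^{*}$ is, a fortiori, an injection witnessing $\hra^{\#}$, so $\hra^{*} \subseteq \hra^{\#}$ on $\lt{I,A}$. Consequently, for every infinite sequence $(\sigma_k)_{k \in \N}$ in $\lt{I,A}$, Higman's lemma supplies indices $n < m$ with $\sigma_n \hra^{*} \sigma_m$, whence $\sigma_n \hra^{\#} \sigma_m$; this is the required well-quasi-orderedness of $\hra^{\#}$.

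There is no deep obstacle remaining once Theorem \ref{dt} is available: the genuine combinatorial content (the gap-condition tree theorem) is precisely what that cited result supplies. The only points requiring care are bookkeeping ones, namely checking the quasi-ordering axioms for $\hra^{\#}$ and reconciling the literal ``permutation of $\{1, \ldots, n\}$'' in the definition of $\hra^{\#}$ with the arbitrary injection used above, which is harmless because the associativity--commutativity of $\#$ lets us reorder the components of $\beta$. Should a self-contained proof be preferred over invoking Higman's lemma by name, I would instead run the Nash--Williams minimal-bad-sequence argument directly on $\hra^{\#}$, using Theorem \ref{dt} to extract a good pair among the leading components of a hypothetical minimal bad sequence.
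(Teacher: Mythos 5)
Your proof is correct, and it reaches the theorem by a slightly different route than the paper. The paper's proof introduces the notion of a \emph{component subsequence} of an infinite sequence of gqod's and rests on a single claim, proved ``in a way similar to'' \cite[Lemma~17]{OT1987}: if an infinite sequence from $\lt{I,A}$ has no $\hra$-bad component subsequence, then it is $\hra^{\#}$-good; combined with Theorem~\ref{dt} (components are connected, so no bad component subsequence can exist), the theorem follows. That claim is in effect the forest (multiset) form of Higman's lemma, and its proof is exactly the Nash--Williams-style argument you offer as your fallback. You instead code each term as a finite nonempty sequence over the wqo alphabet $\mathsf{GQ}^{ctd}(I,A)$, invoke the classical Higman lemma for the subsequence embedding $\hra^{*}$, and conclude via the containment $\hra^{*} \subseteq \hra^{\#}$ (using, correctly, the AC reading of $\#$ to reconcile the definition's ``permutation of $\{1,\ldots,n\}$'' with an injection into $\{1,\ldots,m\}$ --- a genuine infelicity in the paper's wording that you were right to flag). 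What your route buys is an off-the-shelf citation in place of an adapted lemma from \cite{OT1987}, together with explicit attention to points the paper leaves implicit: the quasi-ordering axioms for $\hra^{\#}$ and its well-definedness modulo the identity relation. What the paper's route buys is uniformity with the Okada--Takeuti framework it generalizes and avoidance of the detour through sequences. In both cases the sole combinatorial input is Theorem~\ref{dt}, so the two arguments are of equal strength.
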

\begin{proof}
Let $\{ \alpha_i \}$ and $\{ \beta_i \}$ be infinite sequences of gqod's. We say $\{ \alpha_i \}$ is a \textit{component subsequence} of $\{ \beta_i \}$ if and only if there is a monotone function $f: \N \to \N$ such that for any $i \in \N$, $\alpha_i$ is a component of $\beta_{f(i)}$. In addition, an infinite sequence $\{ a_i \}$ from a quasi ordering $(A, \leq)$ is $\leq$-\textit{bad} if and only if there is no pair $i,j$ of natural numbers such that both $i < j$ and $a_i \leq a_j$ hold.

In a way similar to the proof of \cite[Lemma 17]{OT1987}, one can prove the following claim. Let $\{ \alpha_i \}$ be an infinite sequence from $\lt{I,A}$. If there is no $\hra$-bad component subsequence $\{ \beta_i \}$ of $\{ \alpha_i \}$, then $\{ \alpha_i \}$ is $\hra^{\#}$-good, that is, $\{ \alpha_i \}$ is not $\hra^{\#}$-bad.
By this claim and Theorem \ref{dt}, we have the theorem.
\end{proof}

\begin{cor}
The system $(\lt{I,A} , \lll^{=} )$ is a well quasi ordering.
\end{cor}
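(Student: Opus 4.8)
The plan is to derive the corollary directly from the two results that immediately precede it, namely Proposition~\ref{embed} and Theorem~\ref{wqohra}. The strategy is to show that the identity map on $\lt{I,A}$ reflects the well quasi orderedness of $\hra^{\#}$ onto $\lll^{=}$. Concretely, I would argue that any $\lll^{=}$-bad sequence would also be $\hra^{\#}$-bad, contradicting Theorem~\ref{wqohra}.

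First I would recall the definitions: a sequence $\alpha_0, \alpha_1, \ldots$ from $\lt{I,A}$ is $\lll^{=}$-bad if there is no pair $n < m$ with $\alpha_n \lll^{=} \alpha_m$, and similarly for $\hra^{\#}$. The key observation is Proposition~\ref{embed}, which states that $\alpha \hra^{\#} \beta$ implies $\alpha \lll^{=} \beta$ for all $\alpha, \beta \in \lt{I,A}$. Contrapositively, if $\alpha \not\lll^{=} \beta$ then $\alpha \not\hra^{\#} \beta$. Therefore, suppose for contradiction that $(\lt{I,A}, \lll^{=})$ is not a well quasi ordering; then there exists an infinite sequence $\{\alpha_i\}$ that is $\lll^{=}$-bad, i.e. for all $n < m$ we have $\alpha_n \not\lll^{=} \alpha_m$. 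By the contrapositive of Proposition~\ref{embed}, this yields $\alpha_n \not\hra^{\#} \alpha_m$ for all $n < m$, so $\{\alpha_i\}$ is also $\hra^{\#}$-bad. This contradicts Theorem~\ref{wqohra}, which asserts that $(\lt{I,A}, \hra^{\#})$ is a well quasi ordering. Hence no such $\lll^{=}$-bad sequence exists, and $(\lt{I,A}, \lll^{=})$ is a well quasi ordering.

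I do not anticipate a serious obstacle here, since the corollary is essentially a one-line deduction: well quasi orderedness is preserved when passing from a relation to a coarser one (a relation that holds more often), and Proposition~\ref{embed} establishes exactly that $\hra^{\#}$ refines $\lll^{=}$ on this domain. The only point requiring mild care is that I am using the \emph{strong} formulation of well quasi orderedness (every infinite sequence has an increasing pair), so that the implication transfers cleanly without needing to worry about the weaker decreasing-sequence version discussed earlier in the paper; since Theorem~\ref{wqohra} is proved in the strong sense, the conclusion for $\lll^{=}$ is likewise in the strong sense. Thus the proof reduces to quoting Proposition~\ref{embed} and Theorem~\ref{wqohra} and invoking the monotonicity of well quasi orderedness under coarsening of the order relation.
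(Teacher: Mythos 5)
Your proof is correct and matches the paper's (implicit) argument exactly: the corollary is the immediate consequence of Proposition~\ref{embed} together with Theorem~\ref{wqohra}, since any infinite sequence has a pair $n<m$ with $\alpha_n \hra^{\#} \alpha_m$ and hence $\alpha_n \lll^{=} \alpha_m$. Your contrapositive phrasing via bad sequences is just a logically equivalent rewording of this one-line deduction, and your care about the strong versus weak formulation is consistent with how the paper states and proves Theorem~\ref{wqohra}.
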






\section{Discussion of generalized quasi ordinal diagram systems from the viewpoint of termination proof methods}\label{game}
In this section, we apply the well quasi orderedness of $(\lt{I,A}, \lll^=)$ to termination proof methods for second-order pattern-matching-based rewriting systems. First, we propose a termination proof method induced by the monotonicity property and a restricted substitution property of $(\lt{I,A}, \lll^=)$ (\S \ref{app}). Next, we consider Buchholz-style hydra game as an example (cf. \cite{buchholz1987,HO1998}), and show the termination of a rewriting system in this game by another termination proof method in terms of $(\lt{I,A}, \lll^=)$ (\S \ref{buch}).



\subsection{Application of $(\lt{I,A}, \lll^{=})$ to the termination proof method}\label{app}
Let $V$ be a finite set of variables. In this and next subsections, we consider an arbitrary system $\lt{I,A}$ satisfying the following conditions: (1) $A = V \cup \{ \rho \}$ holds, and neither $x \leq_A y$ nor $y \leq_A x$ holds for any $x,y \in V$, and $\rho$ is the $\leq_A$-minimum element of $A$, and (2) $\rho \in I$ and $I \cap V = \emptyset$ hold, and $\rho <_I x <_I i$ holds for any $x \in V$ and any $i \in I \setminus \{ \rho \}$.
We denote the set of all contexts from $\lt{I,A}$ by $\C$.

A \textit{numeral substitution} $\sigma$ for a gqod $\alpha$ is a substitution assigning a numeral term to each variable in $\alpha$, where a \textit{numeral term} is a connected gqod that consists of $\rho$'s only. Note that the numeral terms are generalization of the numerals represented by no $\#$-branching trees $\rho - \rho - \cdots - \rho$; the numeral terms correspond to the ordinals up to $\varepsilon_0$.

As we will see below, the system $(\lt{I,A}, \lll^{=})$ satisfies not only the monotonicity property (Lemma \ref{mono}) but also the following restricted substitution property (Lemma \ref{num}): For any \textit{connected} terms $\alpha$ and $\beta$ of $\lt{I,A}$ with $\alpha ,\beta \not\in A$, if $\alpha \lll \beta$ holds then $\alpha\sigma \lll \beta\sigma$ holds for any \textit{numeral} substitution $\sigma$. Therefore, if $l \ggg r$ holds for any rule $l \rhd r$ of a rewriting system $R$, then we immediately obtain the termination of the $R$-rewrite relation $\to$ defined as follows: $\alpha \to \beta$ holds if and only if $\alpha \equiv (i, u[ l ])\sigma$ and $\beta \equiv (i, u[ r  ])\sigma$ for some context $u[\ast ]$ and some numeral substitution $\sigma$.
\begin{lem}[The monotonicity property lemma]\label{mono}
Let $\alpha$ and $\beta$ be gqod's with $u[\alpha ] \in \lt{I,A}$ and $u[\beta ] \in \lt{I,A}$. If $\alpha \lll \beta$ holds, then $u[ \alpha  ] \lll u[ \beta ]$ holds.
\end{lem}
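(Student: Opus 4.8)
The plan is to induct on the structure of the context $u[\ast]$, reducing the statement to two one-step congruence properties of $\lll$. For the base case $u \equiv \ast$ there is nothing to prove, since $u[\alpha] \equiv \alpha \lll \beta \equiv u[\beta]$. In the inductive step a context is built from a smaller context $u'$ either by wrapping it under a unary symbol, $u \equiv (i, u')$, or by placing it as one component of a sum, $u \equiv \delta_1 \# \cdots \# u' \# \cdots \# \delta_n$. Writing $\gamma \equiv u'[\alpha]$ and $\gamma' \equiv u'[\beta]$, the induction hypothesis gives $\gamma \lll \gamma'$, so it suffices to establish two claims: (A) if $\gamma \lll \gamma'$ then $(i, \gamma) \lll (i, \gamma')$; and (B) if $\gamma \lll \gamma'$ then $\delta_1 \# \cdots \# \gamma \# \cdots \# \delta_n \lll \delta_1 \# \cdots \# \gamma' \# \cdots \# \delta_n$.

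Claim (B) is the easier one and amounts to monotonicity of the multiset-style ordering of Definition \ref{less}.3. To show $\lqu_i$ at a fixed $i \in \tilde{I}$, I would invoke clause (b) of Definition \ref{less}.3, matching the distinguished component $\gamma'$ against $\gamma$ (licensed by $\gamma \lqs_i \gamma'$) and each remaining $\delta_j$ against itself (licensed by the reflexivity half of Lemma \ref{qo}); strictness then follows from $\gamma' \not\lqs_i \gamma$, the strict half of $\gamma \lqu_i \gamma'$. The bookkeeping when $\gamma$ or $\gamma'$ is unconnected is routine, as one simply compares the corresponding flattened sums componentwise.

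Claim (A) is the crux. Fix $k \in \tilde{I}$; I want $(i, \gamma) \lqu_k (i, \gamma')$. The case $k = \infty$ is immediate from Definition \ref{less}.4, since both terms carry the same outer label $i$ and the comparison reduces to $\gamma \lqu_i \gamma'$, which we have. For $k \in I$ I would argue by an inner induction on $\# \sid{k}{(i,\gamma), (i,\gamma')}$ via the $\forall$-condition of Definition \ref{less}.5. The decisive structural observation is that every index of $(i,\gamma)$ or of $(i,\gamma')$ is $\leq_I i$, since Definition \ref{subset}.2.(c) forbids sections at labels $\not\leq i$. Hence if $k \not\leq_I i$ the two terms have no $k$-sections and $\sid{k}{(i,\gamma),(i,\gamma')} = \emptyset$, so the comparison collapses to the already-settled $\lqs_\infty$ case; while if $k \leq_I i$, every minimal $j \in \sid{k}{(i,\gamma),(i,\gamma')}$ satisfies $k <_I j \leq_I i$ and therefore $\# \sid{j}{(i,\gamma),(i,\gamma')} < \# \sid{k}{(i,\gamma),(i,\gamma')}$, which feeds the inner induction. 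To verify the $\forall$-condition I would check that each $\eta \subset_k (i,\gamma)$ satisfies $\eta \lqs_k (i,\gamma')$ and $(i,\gamma') \not\lqs_k \eta$, using the chain $\eta \lqs_k \gamma \lqs_k \gamma' \lqs_k (i,\gamma')$ built from Lemma \ref{qo} (for $\eta \subset_k \gamma$, or directly when $\eta = \gamma$), the hypothesis $\gamma \lqu_k \gamma'$, and Lemma \ref{cpr} (giving $\gamma' \lqu_k (i,\gamma')$ for $k \leq_I i$); the index clause is then discharged by the inner induction at each minimal $j$, or by the $\lqs_\infty$ comparison when $\sid{k}{(i,\gamma),(i,\gamma')} = \emptyset$. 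Strictness, i.e. $(i,\gamma') \not\lqs_k (i,\gamma)$, is carried along the same induction: an $\exists$-witness $\beta' \subset_k (i,\gamma)$ would force $(i,\gamma) \lqs_k \beta'$ by transitivity against $(i,\gamma) \lqs_k (i,\gamma')$, contradicting the strict half of Lemma \ref{qo}; and a $\forall$-derivation would force $(i,\gamma') \lqs_j (i,\gamma)$ at a minimal $j$ (or $(i,\gamma') \lqs_\infty (i,\gamma)$), contradicting the inductive (resp. $\infty$-)strictness.

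The main obstacle is exactly this unary case (A): one must run the induction on the index measure $\# \sid{k}{\cdot,\cdot}$ simultaneously for the positive relation and for its strictness, while tracking how the $\forall$-condition's recursion to minimal indices $j$ interacts with Lemma \ref{cpr}. The observation that all indices of $(i,\gamma)$ and $(i,\gamma')$ lie $\leq_I i$ is what makes the inner induction well founded and what ties the auxiliary comparisons at the indices $j$ back to the hypothesis $\gamma \lll \gamma'$; pinning this down cleanly, rather than the essentially mechanical Claim (B), is where the real work lies.
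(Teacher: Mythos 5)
Your proposal matches the paper's own proof essentially step for step: the paper likewise reduces the lemma to the $\#$-congruence claim plus the unary claim that $\gamma \lll \gamma'$ implies $(i,\gamma) \lll (i,\gamma')$, settles $\lqs_{\infty}$ and the labels $j \not\leq i$ outright (via emptiness of the relevant index sets), and for $j \leq i$ runs the same inner induction on $\# \sid{j}{(i,\gamma ),(i,\gamma ')}$ through the $\forall$-condition, using Lemma \ref{cpr} and transitivity for the section comparisons and carrying strictness along by the same contradiction arguments. Your argument is correct and there are no substantive differences from the paper's proof.
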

\begin{proof}
We easily obtain $\gamma_1 \# \cdots \# \gamma_k \# \alpha \# \gamma_{k+1} \# \cdots \# \gamma_n \lll \gamma_1 \# \cdots \# \gamma_k \# \beta \# \gamma_{k+1} \# \cdots \# \gamma_n$ if $\alpha \lll \beta$ holds. Therefore, it suffices to show that for any two gqod's $\alpha$ and $\beta$ with $(i,\alpha ) \in \lt{I,A}$ and $(i,\beta ) \in \lt{I,A}$, if $\alpha \lll \beta$ holds then $(i,\alpha ) \lll (i, \beta )$ holds.

Assume that $\alpha \lll \beta$ holds, then it is obvious that $(i,\alpha ) \lqu_{\infty} (i,\beta )$ holds.
We show that $(i,\alpha ) \lqu_j (i,\beta )$ holds for any $j$ with $j \not\leq i$. In this case, neither $(i,\alpha )$ nor $(i, \beta )$ has a $j'$-section for any $j'$ with $j' \geq j$, so $(i, \alpha ) \lqs_j (i,\beta )$ holds by $\forall$-condition, since we already have $(i,\alpha ) \lqs_{\infty} (i,\beta )$. In the present case, it is easy to verify that $(i,\beta ) \not\leq^{\mathsf{q}}_{j} (i,\alpha )$ holds.

Finally, we show that $(i,\alpha ) \lqu_j (i,\beta )$ holds for any $j$ with $j \leq i$ by induction on $\# \sid{j}{(i,\alpha ), (i,\beta )}$. Consider the base case, where $\# \sid{j}{(i,\alpha ), (i,\beta )} =0$ holds, that is, $j = i$ holds. In the base case, we have $(i,\alpha ) \lqs_i (i,\beta )$ by $\forall$-condition, since $(i,\alpha ) \lqs_{\infty} (i,\beta )$ holds and for any $\gamma \subset_i (i,\alpha )$, $\gamma \lqs_i \alpha \lqu_{i} \beta \lqu_i (i, \beta )$ holds. Suppose that $(i,\beta ) \lqs_i (i,\alpha )$ holds. If $(i,\beta ) \lqs_i (i,\alpha )$ holds by $\exists$-condition, there is a gqod $\gamma \subset_i (i,\alpha )$ such that $(i,\beta ) \lqs_i \gamma \lqs_i \alpha$ holds. Then, we have $\beta \lqu_i \alpha$, contradiction. We have a contradiction as well, if $(i,\beta ) \lqs_i (i,\alpha )$ holds by $\forall$-condition. Therefore, $(i,\beta ) \not\leq^{\mathsf{q}}_{i} (i,\alpha )$ holds.

Consider the induction step. In a way similar to the base case, we have $\gamma \lqu_j (i,\beta )$ for any $\gamma \subset_j (i, \alpha )$. Note that we have $\beta \lqu_j (i, \beta)$ by Lemma \ref{cpr} because $(i,\beta) \in \lt{I,A}$ holds. Let $j'$ be a minimal element of $\sid{j}{(i,\alpha ),(i,\beta )}$. Since $j'$ is an index of $(i,\alpha )$ or $(i,\beta )$, $j' \leq i$ holds. By IH, we have $(i,\alpha ) \lqs_{j'} (i,\beta )$. If $(i,\beta ) \lqs_j (i,\alpha )$ holds then we obtain a contradiction, so $(i,\alpha ) \lqu_j (i,\beta )$ holds.
\end{proof}

As stated above, we have the following restricted substitution property. Note that we cannot have the full substitution property of $(\lt{I,A}, \lll^=)$ because there is a counterexample. Take a substitution $\sigma$ assigning $(\rho ,\rho)$ to $x$. Then, we have both $x \# x \lll (\rho ,\rho ) \# x$ and $(x \# x)\sigma \not\lll ((\rho ,\rho ) \# x)\sigma$.
\begin{lem}[The numeral substitution property lemma]\label{num}
For any connected gqod's $\alpha$ and $\beta$ of $\lt{I,A}$ with $\alpha ,\beta \not\in A$, if $\alpha \lll \beta$ holds then $\alpha\sigma \lll \beta\sigma$ holds for any numeral substitution $\sigma$.
\end{lem}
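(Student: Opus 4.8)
The plan is to fix the numeral substitution $\sigma$ and reduce the goal to index-wise comparisons. Since $\alpha \equiv (j,\alpha_0)$ and $\beta \equiv (j',\beta_0)$ are connected and not in $A$, the substitution only rewrites variable leaves, so $\alpha\sigma \equiv (j,\alpha_0\sigma)$ and $\beta\sigma \equiv (j',\beta_0\sigma)$ retain the same root labels $j,j' \in I$; by the definition of $\lll$ it therefore suffices to prove $\alpha\sigma \lqu_i \beta\sigma$ for every $i \in \tilde{I}$. I would establish this by the same double induction on $\langle l(\alpha,\beta),\, \#\sid{i}{\alpha,\beta}\rangle$ that underlies Definition \ref{less}, proving simultaneously a strengthened statement that also transfers the non-strict relation $\lqs_i$ for the children and $i$-sections invoked in the $\exists$- and $\forall$-clauses. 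Keeping the index $i$ fixed in these auxiliary transfers, rather than quantifying over all indices at once, is essential: it is precisely the passage to all indices together with strictness that breaks down for unconnected terms, as the collapse $x \# x \lll (\rho,\rho)\# x$ already indicates.

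The first ingredient is a bookkeeping lemma describing how $\sigma$ acts on indices and $i$-sections. Every node of a numeral term carries the label $\rho$, which by the standing assumption $\rho <_I x <_I i$ (for $x \in V$ and $i \in I \setminus \{\rho\}$) is the $\leq_I$-least element of $I$; hence substituting a numeral for a variable leaf creates no index strictly above $\rho$. From this I expect to read off, for every $i \in I$ with $i \neq \rho$, that $\gamma \subset_i \alpha$ holds iff $\gamma\sigma \subset_i \alpha\sigma$, and that $\sid{i}{\alpha\sigma,\beta\sigma} = \sid{i}{\alpha,\beta}$ with the very same minimal elements. Lemma \ref{cpr} provides the complementary control needed when a variable leaf, of label $x >_I \rho$, is replaced by a numeral whose root is labelled $\rho <_I x$: it relates the substituted numeral to the term in which it sits, exactly as in the proof of Proposition \ref{embed}. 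Together these let each $\lqs_i$-computation on $\alpha\sigma,\beta\sigma$ at a level $i \neq \rho$ be matched step for step with the corresponding computation on $\alpha,\beta$.

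With the bookkeeping in place I would run the induction along the clauses of Definition \ref{less}. For $i = \infty$ the root labels are unchanged, so $\lqs_\infty$ reduces to the child comparison $\alpha_0\sigma \lqs_j \beta_0\sigma$, which is handed over to the induction hypothesis (the children have smaller $l$). If $\alpha\sigma \lqs_i \beta\sigma$ is to be obtained by the $\exists$-condition, I transport a witness $\beta' \subset_i \beta$ to $\beta'\sigma \subset_i \beta\sigma$ using the bookkeeping lemma and apply the hypothesis to $\alpha \lqs_i \beta'$. For the $\forall$-condition, each $\alpha' \subset_i \alpha$ yields $\alpha'\sigma \subset_i \alpha\sigma$, and the hypothesis supplies both $\alpha'\sigma \lqs_i \beta\sigma$ and $\beta\sigma \not\lqs_i \alpha'\sigma$; since the index sets and their minimal elements are preserved, the escalation to a minimal $j \in \sid{i}{\alpha,\beta}$ transports verbatim and is discharged by the hypothesis with strictly smaller $\#\sid{\cdot}{\cdot}$. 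The purely unconnected re-bracketings that arise in between are covered by the monotonicity already recorded in Lemma \ref{mono}.

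The main obstacle is the bottom level $i = \rho$, where the bookkeeping correspondence genuinely breaks down: the substituted numerals now \emph{do} contribute $\rho$-sections, so the $\rho$-sections of $\alpha\sigma$ are no longer in bijection with those of $\alpha$. Here I must show that feeding the \emph{same} $\sigma$ into both sides keeps $\lqu_\rho$ intact, i.e.\ that no strict inequality degenerates into an equality; this is the exact point where connectedness and $\alpha,\beta \notin A$ are indispensable, in contrast with the unconnected collapse whose image under $\sigma$ becomes an equality. I would isolate a dedicated bottom-level sublemma asserting that the $\rho$-sections newly created on the two sides occur in matching positions under $\sigma$ and that numeral terms are compared linearly, so that $\alpha\sigma \lqu_\rho \beta\sigma$ follows. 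Preserving the strict negation $\beta\sigma \not\lqs_\rho \alpha\sigma$ across this sublemma is the most delicate step and is where I expect the bulk of the case analysis to concentrate; once it is settled, assembling the index-wise comparisons yields $\alpha\sigma \lll \beta\sigma$ and hence the lemma.
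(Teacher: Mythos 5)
Your proposal is correct and takes essentially the same route as the paper: the paper's entire proof consists of the instruction to verify, by double induction on $\langle l(\alpha ,\beta ), \#\sid{i}{\alpha ,\beta}\rangle$, that for any $i \in \tilde{I}$ and any connected $\alpha ,\beta \not\in A$, $\alpha \lqu_i \beta$ implies $\alpha\sigma \lqu_i \beta\sigma$ --- exactly the index-wise statement, restriction to connected non-$A$ terms, and induction measure that you isolate. Your bookkeeping lemma on sections and index sets, the strengthening to non-strict transfer for children and sections, and the separate treatment of the bottom level $i=\rho$ are elaborations of case analysis that the paper leaves entirely implicit, and they are consistent with its one-line argument.
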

\begin{proof}
Verify the following by double induction on $\langle l (\alpha ,\beta ) , \# \sid{i}{\alpha ,\beta}\rangle$: For any $i \in \tilde{I}$, any connected terms $\alpha$ and $\beta$ with $\alpha ,\beta \not\in A$, if $\alpha \lqu_i \beta$ holds then $\alpha\sigma \lqu_i \beta\sigma$ holds for any numeral substitution $\sigma$.
\end{proof}

\begin{prop}
Let $l \rhd r$ be an  arbitrary rule of a rewriting system $R$. For any $i \in I$ and any context $u[\ast]$, if both of $(i, u[l])$ and $(i, u[r])$ are terms of $\lt{I,A}$ and $l \ggg r$ holds, then $(i, u[l])\sigma \ggg (i, u[r])\sigma$ holds for any numeral substitution $\sigma$.
\end{prop}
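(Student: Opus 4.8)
The plan is to reduce the statement to the two properties already established for $(\lt{I,A}, \lll^{=})$: the monotonicity property (Lemma \ref{mono}) and the numeral substitution property (Lemma \ref{num}). Unwinding the notation, the hypothesis $l \ggg r$ means precisely $r \lll l$, and the desired conclusion $(i, u[l])\sigma \ggg (i, u[r])\sigma$ means precisely $(i, u[r])\sigma \lll (i, u[l])\sigma$; thus it suffices to derive the latter.

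First I would regard $v[\ast] := (i, u[\ast])$ as a single context from $\C$, so that $v[l] = (i, u[l])$ and $v[r] = (i, u[r])$, both of which lie in $\lt{I,A}$ by hypothesis. Since $r \lll l$, applying the monotonicity property (Lemma \ref{mono}) to the context $v[\ast]$ yields $(i, u[r]) \lll (i, u[l])$. Equivalently, one may invoke Lemma \ref{mono} twice, first filling the hole of $u[\ast]$ and then prefixing the outer symbol $f_i$, using that the composition of two contexts is again a context.

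Second I would note that $(i, u[l])$ and $(i, u[r])$ are \emph{connected} gqod's that do \emph{not} belong to $A$: they have the form $(i, \cdot)$ with $i \in I$, and $I \cap A = \emptyset$. Hence the numeral substitution property (Lemma \ref{num}) applies to $(i, u[r]) \lll (i, u[l])$ and delivers $(i, u[r])\sigma \lll (i, u[l])\sigma$ for every numeral substitution $\sigma$, that is, $(i, u[l])\sigma \ggg (i, u[r])\sigma$, as required.

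The argument is thus a short composition of the two lemmas, and I do not expect a genuine obstacle at this stage, since all of the real difficulty has been absorbed into Lemmas \ref{mono} and \ref{num}. The only points that need checking are bookkeeping: that $(i, u[\ast])$ is a legitimate context whose instances remain in $\lt{I,A}$ (which is exactly the standing hypothesis on $(i,u[l])$ and $(i,u[r])$), and that the side conditions of Lemma \ref{num} hold, namely that both terms are connected and lie outside $A$, which is guaranteed by $i \in I$. If one wished to be fully careful, the subtlest point is simply to confirm that the single substitution $\sigma$ applied to the whole term $(i, u[l])$, hitting variables occurring both in $u$ and in $l$, is precisely the object on which Lemma \ref{num} operates, so that no separate commutation of $\sigma$ with the context is required.
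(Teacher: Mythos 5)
Your proposal is correct and coincides with the paper's own proof, which is exactly the one-line appeal to Lemma \ref{mono} and Lemma \ref{num}: one first applies monotonicity to the context $(i,u[\ast])$ to get $(i,u[r]) \lll (i,u[l])$, and only then the numeral substitution lemma, whose connectedness and non-$A$ side conditions are secured precisely by the outer $(i,\cdot)$ wrapper, just as you note. Your bookkeeping remarks (that the order of the two steps matters because $l$ and $r$ themselves need not be connected, and that $\sigma$ acts on the whole instantiated term so no commutation with the context is needed) are accurate refinements of what the paper leaves implicit.
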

\begin{proof}
By Lemma \ref{mono} and Lemma \ref{num}.
\end{proof}



\subsection{A case study with the rewrite system of Buchholz-style hydra game}\label{buch}
Below we study the termination proof of a rewrite system version of Buchholz-style hydra game, by using our generalized quasi ordinal diagram systems.
Instead of the general method for termination proofs proposed in the last subsection, we use the method relative to a given rewriting system.

The set $\C_i$ of contexts for any $i \in I$ is defined as follows:
\begin{enumerate}

\item $\ast$ is a connected term of $\C_i$.

\item if $u_i [ \ast ]$ is a connected term of $\C_i$ and $\alpha_1 ,\ldots ,\alpha_n ,\beta_1 ,\ldots ,\beta_m$ are connected gqod's in $\lt{I,A}$, then $u_i' [\ast] \equiv \alpha_1 \# \cdots \# \alpha_n \# u_i [\ast] \# \beta_1 \# \cdots \# \beta_m$ is an unconnected term of $\C_i$.

\item if $u_i [\ast]$ is a term of $\C_i$ and $(j , u_i[\rho ]) \in \lt{I,A}$ holds with $j \geq i$, then $(j ,u_i [\ast] )$ is a connected term of $\C_i$.
\end{enumerate}
Note that $u[ \alpha ]$ and $u_i [\alpha ]$ may not be path comparable trees even if $u[ \ast ]$, $u_i [\ast ]$ and $\alpha$ are.

An element $i$ of $I$ is a \textit{successor} if the set of all $I$-elements smaller than $i$ has a maximal element. An element $\lambda$ of $I$ is a \textit{limit element} if the set of all $I$-elements smaller than $\lambda$ is non-empty and does not have a maximal element.

To formulate Buchholz-style hydra game, we define the notion of \textit{segments}.
\begin{defn}[Segments]
For any connected gqod $\alpha$ in $\lt{I,A}$ and any finite subset $J$ of $I$, the \textit{segment} $\alpha \uhr J$ of $\alpha$ on $J$ is defined as follows: If $J$ is empty, $\alpha \uhr J := \alpha$. Assume that $J$ is not empty. 
\begin{enumerate}

\item If $\alpha \in A$ holds, then $\alpha \uhr J := \alpha$.

\item Suppose that $\alpha$ is of the form $(i, \alpha_1 \# \cdots \# \alpha_n )$ with $\alpha_1 , \cdots , \alpha_n$ connected. If there is an element of $J$ that is not comparable with $i$ then $\alpha \uhr J := \rho$, otherwise $\alpha \uhr J := (i, \alpha_1 \uhr J \# \cdots \# \alpha_n \uhr J )$.
\end{enumerate}
\end{defn}

 For any $n \in \N$, let $\alpha \cdot (n+1)$ be $\underbrace{\alpha \# \cdots \# \alpha}_{\text{$n+1$ times}}$. We define $I \uhr i := \{ j \in I  \mid j < i \}$. A \textit{substitution} $\sigma$ is a mapping from $V$ to $\lt{I,A}$.
\begin{defn}[\textbf{Buchholz-style hydra game rules}]
Buchholz-style hydra game rules on $\lt{I,A}$ are the following rules (1), (1)', (2) and (3):
\begin{enumerate}

\item[(1)] $( i, \alpha_1 \# \cdots \# \alpha_n \# (\rho ,\rho) \# \beta_1 \# \cdots \# \beta_m ) \rhd  ( (i,  \alpha_1 \# \cdots \# \alpha_n \# \beta_1 \# \cdots \# \beta_m ) \cdot ( k+1 ) ) \# \rho \cdot 2 $, where $k$ is an arbitrarily chosen natural number and $n+m >0$ holds.

\item[(1)'] $( i, (\rho ,\rho) \bigr) \rhd  \bigl( (i, \rho ) \cdot (k+1) ) \# \rho $, where $k$ is an arbitrarily chosen natural number.

\item[(2)] $( j , \alpha_1 \# \cdots \# \alpha_n \# u_i [ (i ,a) ] \# \beta_1 \# \cdots \# \beta_m ) \rhd \Biggl( j, \alpha_1 \# \cdots \# \alpha_n \# u_i \biggl[ \Bigl( i^- , u_i [ (\rho ,a) ]  \Bigr) \uhr J \biggr] \# \beta_1 \# \cdots \# \beta_m \Biggr) \# \rho$,

where $a \in A$ and $j < i$ hold, $i$ is a successor, $i^-$ is an arbitrarily chosen maximal element of $I \uhr i$, $u_i [\ast ]$ is a connected term of $\mathcal{C}_i$ and $J := \emptyset$ if $u_i [\ast ] = \ast$, otherwise
\begin{center}
$ J := \{ j , i^- \} \cup \{ h \in I \mid \text{$h$ is a label occurring in the path between $\ast$ and the root of $u_i[\ast]$}\}$.
\end{center}

\item[(3)] $(\lambda ,a) \rhd (i ,a ) \# \rho $, where $\lambda$ is a limit element and $i$ is an arbitrarily chosen element of $I \uhr \lambda$. 

\end{enumerate}

\end{defn}

\begin{rem}\label{seg}
Let us comment on the definition above.
\begin{itemize}

\item Choices in each of the rules are, intuitively, made by hydras after Heracles's attacks.

\item By restricting the initial states of hydras to the form $(\rho , \alpha_1 ) \# \cdots \# (\rho , \alpha_n)$, one can guarantee that at least one rule applies to the initial hydras of the games. Moreover, if rewriting in these games terminates, then the final hydra is a forest of numeral terms that have at most the depth $2$.

\item The reason we attached one or two $\rho$'s to the right gqod's of the rules above is that we indicate by the number of the occurrences of $\rho$'s at leaves how many times at most we applied the rules.

\item We take the segment $\Bigl( i^- , u_i [ (\rho ,a) ]  \Bigr) \uhr J$ at RHS in the rule (2) to make sure that the right gqod in the rule (2) belongs to $\lt{I,A}$, since $u_i \Bigl[ \Bigl( i^- , u_i [ (\rho ,a) ]  \Bigr)  \Bigr]$ is not always a path comparable tree.


\item Our hydras have labels from the well partial ordering $I$, while Buchholz's original hydras (\cite{buchholz1987}) have labels from $\N \cup \{ \omega \}$. Moreover, our rule (3) is more liberal than Buchholz's corresponding rule. On the other hand, the duplication in the rule (2) of our hydra game is more restricted than the one in Buchholz's corresponding rule. Because of this restriction, we have Lemma \ref{sub} below.

\end{itemize}
\end{rem}

\begin{defn}[Rewrite relation for Buchholz-style hydra game]
For any two terms $\alpha$ and $\beta$ of $\lt{I,A}$, $\alpha \to \beta$ holds if and only if $\alpha \equiv u[l \sigma]$ and $\beta \equiv u[r \sigma]$ hold for some context $u[\ast]$, some substitution $\sigma$ and some $l ,r \in \lt{I,A}$ with $l \rhd r$.
\end{defn}
\begin{exa}
Consider $\lt{I ,V \cup \{ 0\}}$ with $I$ defined in Example \ref{exazero}. Then, a play runs as follows.
\[
\begin{xy}
(0, 0) *=0{\bullet}*+!U{0} ="A", (-4, 4) *=0{\bullet}*+!RU{\omega '} ="A0",
(-8, 8) *=0{\bullet}*+!RU{1} ="A00", (0, 8) *=0{\bullet}*+!LU{1'} ="A01",
(-8, 12) *=0{\bullet}*+!D{x} ="A000", (0, 12) *=0{\bullet}*+!D{0} ="A010",
(15, 10) *{\to_{(2),\; i= 1'}},
\ar@{-} "A";"A0" \ar@{-} "A0";"A00" \ar@{-} "A0";"A01" \ar@{-} "A00";"A000" \ar@{-} "A01";"A010"
\end{xy}
\begin{xy}
(4, 0) *=0{\bullet}*+!U{0} ="B",
(0, 0) *=0{\bullet}*+!U{0} ="A", (-4, 4) *=0{\bullet}*+!RU{\omega '} ="A0",
(-8, 8) *=0{\bullet}*+!RU{1} ="A00", (-8, 12) *=0{\bullet}*+!D{x} ="A000",
(0, 8) *=0{\bullet}*+!LU{0} ="A01", (4, 12) *=0{\bullet}*+!LU{\omega '} ="A010",
(0, 16) *=0{\bullet}*+!RU{1} ="A0100", (0, 20) *=0{\bullet}*+!D{x} ="A01000",
(8, 16) *=0{\bullet}*+!LU{0} ="A0101", (8, 20) *=0{\bullet}*+!D{0} ="A01010",
(20, 10) *{\to_{(1),\; k = 2}},
\ar@{-} "A";"A0" \ar@{-} "A0";"A00" \ar@{-} "A0";"A01" \ar@{-} "A00";"A000"
\ar@{-} "A01";"A010" \ar@{-} "A010";"A0100" \ar@{-} "A0100";"A01000"
\ar@{-} "A010";"A0101" \ar@{-} "A0101";"A01010"
\end{xy}
\begin{xy}
(4, 0) *=0{\bullet}*+!U{0} ="B",
(0, 0) *=0{\bullet}*+!U{0} ="A", (-4, 4) *=0{\bullet}*+!RU{\omega '} ="A0",
(-8, 8) *=0{\bullet}*+!RU{1} ="A00", (-8, 12) *=0{\bullet}*+!D{x} ="A000",
(0, 8) *=0{\bullet}*+!LU{0} ="A01",
(-8, 20) *=0{\bullet}*+!RU{\omega '} ="A010", (-11, 25) *=0{\bullet}*+!RU{1} ="A0100", (-11, 29) *=0{\bullet}*+!D{x} ="A01000",
(2, 20) *=0{\bullet}*+!LU{\omega '} ="A011", (-1, 25) *=0{\bullet}*+!RU{1} ="A0110", (-1, 29) *=0{\bullet}*+!D{x} ="A01100",
(12, 20) *=0{\bullet}*+!LU{\omega '} ="A012", (9, 25) *=0{\bullet}*+!RU{1} ="A0120", (9, 29) *=0{\bullet}*+!D{x} ="A01200",
(15, 10) *=0{\bullet}*+!L{0} ="A013", (12, 12) *=0{\bullet}*+!L{0} ="A014",
(25, 10) *{\to_{(3)}}
\ar@{-} "A";"A0" \ar@{-} "A0";"A00" \ar@{-} "A0";"A01" \ar@{-} "A00";"A000"
\ar@{-} "A01";"A010" \ar@{-} "A010";"A0100" \ar@{-} "A0100";"A01000"
\ar@{-} "A01";"A011" \ar@{-} "A011";"A0110" \ar@{-} "A0110";"A01100"
\ar@{-} "A01";"A012" \ar@{-} "A012";"A0120" \ar@{-} "A0120";"A01200"
\ar@{-} "A01";"A013" \ar@{-} "A01";"A014"
\end{xy}
\begin{xy}
(4, 0) *=0{\bullet}*+!U{0} ="B",
(0, 0) *=0{\bullet}*+!U{0} ="A", (-4, 4) *=0{\bullet}*+!RU{5} ="A0", (6, 6) *=0{\bullet}*+!L{0} ="A1",
(-8, 8) *=0{\bullet}*+!RU{1} ="A00", (-8, 12) *=0{\bullet}*+!D{x} ="A000",
(0, 8) *=0{\bullet}*+!LU{0} ="A01",
(-8, 20) *=0{\bullet}*+!RU{\omega '} ="A010", (-11, 25) *=0{\bullet}*+!RU{1} ="A0100", (-11, 29) *=0{\bullet}*+!D{x} ="A01000",
(2, 20) *=0{\bullet}*+!LU{\omega '} ="A011", (-1, 25) *=0{\bullet}*+!RU{1} ="A0110", (-1, 29) *=0{\bullet}*+!D{x} ="A01100",
(12, 20) *=0{\bullet}*+!LU{\omega '} ="A012", (9, 25) *=0{\bullet}*+!RU{1} ="A0120", (9, 29) *=0{\bullet}*+!D{x} ="A01200",
(15, 10) *=0{\bullet}*+!L{0} ="A013", (12, 12) *=0{\bullet}*+!L{0} ="A014",
\ar@{-} "A";"A0" \ar@{-} "A";"A1" \ar@{-} "A0";"A00" \ar@{-} "A0";"A01" \ar@{-} "A00";"A000"
\ar@{-} "A01";"A010" \ar@{-} "A010";"A0100" \ar@{-} "A0100";"A01000"
\ar@{-} "A01";"A011" \ar@{-} "A011";"A0110" \ar@{-} "A0110";"A01100"
\ar@{-} "A01";"A012" \ar@{-} "A012";"A0120" \ar@{-} "A0120";"A01200"
\ar@{-} "A01";"A013" \ar@{-} "A01";"A014"
\end{xy}
\]
\end{exa}

The termination proof method in this subsection consists in the following substitution property.
\begin{lem}[The relative substitution property lemma]\label{sub}
For any substitution $\sigma$ and any two gqod's $l$ and $r$, if both of $l\sigma$ and $r\sigma$ belong to $\lt{I,A}$ and $l \rhd r$ holds, then $l\sigma \ggg r\sigma$ holds.
\end{lem}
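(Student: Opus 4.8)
The plan is to fix a rule $l \rhd r$ and a substitution $\sigma$ with $l\sigma, r\sigma \in \lt{I,A}$, and to prove the strict inequality $r\sigma \lll l\sigma$ (equivalently $l\sigma \ggg r\sigma$) by a case analysis on the four rules. In every case $r\sigma$ has the shape $\delta \# \rho$ or $\delta \# \rho \# \rho$, where $\delta$ is a natural sum of copies of a single connected gqod, so there is a uniform first step common to all cases. Since $\rho \in A$, clause 2 of Definition \ref{less} gives $\rho \lqu_h \gamma$ for every connected $\gamma \notin A$ and every $h \in \tilde I$; hence by clause 3(a) of Definition \ref{less} (taking the connected $l\sigma$ as the single dominating component) it suffices to show that each connected summand of $\delta$ is $\lll l\sigma$. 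Thus the lemma reduces to one ``principal'' strict inequality per rule.

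For rules (1), (1)$'$ and (3) the principal inequality is a genuine decrease of the exchanged subterm, which I would finish with the monotonicity lemma (Lemma \ref{mono}). For (1) and (1)$'$ the inequality is $(i, \gamma) \lll (i, \gamma \# (\rho,\rho))$ (respectively $(i,\rho) \lll (i,(\rho,\rho))$): inserting a $(\rho,\rho)$ summand below a fixed index strictly increases a gqod, which I obtain from $\gamma \lll \gamma \# (\rho,\rho)$ (an instance of clause 3(b) of Definition \ref{less}, with strictness because the extra summand cannot be matched back by a term with fewer components) followed by Lemma \ref{mono}. For (3) the inequality is $(i, a\sigma) \lll (\lambda, a\sigma)$ with $i < \lambda$; here the root index strictly drops, so $(i,a\sigma) \lqu_\infty (\lambda, a\sigma)$ by clause 4 of Definition \ref{less}, and the remaining coordinates follow from the tree embedding $(i,a\sigma) \hra (\lambda, a\sigma)$ (the identity on $a\sigma$) via Proposition \ref{embed}, strictness coming again from the $\infty$-coordinate.

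Rule (2) is the main obstacle, and it is essential that its principal inequality is \emph{not} reducible to a bare comparison of the exchanged fillers: the filler comparison $(i^{-}, u_i\sigma[(\rho, a\sigma)]) \uhr J \lll (i, a\sigma)$ is in general \emph{false}, because the duplicated copy of $u_i\sigma$ carries labels $\geq i$ (by the definition of $\C_i$) that would surface, at the $\infty$-coordinate, as a section larger than $(i, a\sigma)$. What does survive is the inequality in which the \emph{outer} copy of $u_i$ is retained on both sides: applying Lemma \ref{mono} only to the context $(j, \alpha_1\sigma \# \cdots \# \ast \# \cdots \# \beta_m\sigma)$, and absorbing the trailing $\rho$ as above, reduces rule (2) to
\[
(\star)\qquad u_i\sigma\bigl[(i^{-}, u_i\sigma[(\rho, a\sigma)]) \uhr J\bigr] \;\lll\; u_i\sigma\bigl[(i, a\sigma)\bigr].
\]

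I would prove $(\star)$ by induction on the inner context $u_i$, comparing the two connected gqod's directly through the double induction of Definition \ref{less}. The decisive structural points are: (i) the freshly created root $i^{-}$ of the replaced filler \emph{caps} the $h$-sections for every index $h$ with $h \not\le i^{-}$ (clause 2(c) of Definition \ref{subset}), so at those indices the duplicated material is invisible and the two sides of $(\star)$ carry the same sections, coming from the common outer copy of $u_i\sigma$, while the right side in addition exposes $a\sigma$ through its $(i,a\sigma)$-filler; (ii) for indices $h \leq i^{-}$ the duplicated copy of $u_i\sigma$ repeats exactly the labels $\geq i$ of the surrounding outer copy, so matching the duplicate against the original term lines the two up node-for-node down to the innermost position, where the index has dropped from $i$ (on the right of $(\star)$) to $\rho$ (inside the duplicate on the left), yielding the required strict decrease there; and (iii) the segment $\uhr J$ is exactly what keeps the left-hand side path comparable and prunes the off-path material incomparable with the path of $u_i$, so that the surviving duplicate nodes do align with the outer copy. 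The hard part is organizing the induction for $(\star)$ so that the minimal-element recursion of the $\forall$-condition of Definition \ref{less} always descends to the innermost index drop before reaching a problematic high label; the restricted, single-context form of the duplication and the pruning by $\uhr J$ are precisely the features that make this possible, which is why the analogue for unrestricted substitutions fails (cf. the counterexample preceding Lemma \ref{num}).
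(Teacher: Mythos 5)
Your plan follows the paper's own proof: the paper likewise treats rules (1), (1)$'$ and (3) as obvious, and for rule (2) reduces, via Lemma \ref{mono} and the absorption of the trailing $\rho$, to exactly your inequality $(\star)$ with the outer copy of $u_i\sigma$ retained, which it proves by induction on the build-up of $u_i\sigma[\ast]$ split into the two regimes $h \not\leq i^{-}$ or $h = \infty$ (your point (i), the paper's Sublemma 1) and $h \leq i^{-}$ (your point (ii), where the paper chains $(\alpha' \uhr J)\sigma \lll \alpha'\sigma \lqs_h u_i\sigma[(\rho, a\sigma)] \lll u_i\sigma[(i, a\sigma)]$, its Sublemma 2). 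The only discrepancy is notational: in $(\star)$ the segment should be taken before substitution, i.e. $\bigl((i^{-}, u_i[(\rho,a)]) \uhr J\bigr)\sigma$ as in the instantiated rule, rather than $(i^{-}, u_i\sigma[(\rho,a\sigma)]) \uhr J$.
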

\begin{proof}
We consider the rule (2) only, since the cases of the other rules are obvious. For any substitution $\sigma$ satisfying the conditions of the lemma, we show
\begin{center}
$( j , \vec{\alpha \sigma} \# u_i \sigma [ (i ,a\sigma) ] \# \vec{\beta \sigma} \rhd \Bigl( j, \vec{\alpha \sigma} \# u_i \sigma \Bigl[ \Bigl(  \bigl( i^- ,  u_i [ (\rho ,a) ] \bigr) \uhr J \Bigr) \sigma \Bigr] \# \vec{\beta \sigma} \Bigr) \# \rho $,
\end{center}
where we abbreviate $\alpha_1\sigma \# \cdots \# \alpha_n\sigma$ and $\beta_1\sigma \# \cdots \# \beta_m\sigma$ as $\vec{\alpha\sigma}$ and $\vec{\beta\sigma}$, respectively. We verify the claim above by proving the following two sublemmas. The lemma follows from these sublemmas by the monotonicity. Note that one easily obtains $(\alpha \uhr J )\sigma \lll^= \alpha\sigma$ by induction on $l(\alpha)$.
%
\\[10pt]
\textbf{Sublemma 1.}
$u_i\sigma \Bigl[ \Bigl(  \bigl( i^- ,  u_i [ (\rho ,a) ] \bigr) \uhr J \Bigr) \sigma \Bigr] \lqu_h u_i \sigma [(i ,a\sigma )]$ holds for any $h$ with $h \not\leq i^-$ or $h = \infty$.

By induction on the built-up of $u_i\sigma [\ast ]$. If $u_i\sigma [\ast ]$ is $\ast$, then one can prove this sublemma by subinduction on $\#\sid{h}{ ( ( i^- ,  u_i [ (\rho ,a) ] ) \uhr J ) \sigma ,  (i ,a\sigma )}$.

Suppose that $u_i\sigma [\ast ]$ is $(k, \gamma_1 \# \cdots \# \gamma_l \# u_i '[ \ast ] \# \delta_1 \# \cdots \# \delta_{l'})$ with $u_i '[\ast ]$ connected. We abbreviate $\gamma_1 \# \cdots \# \gamma_l$ and $\delta_1 \# \cdots \# \delta_{l'}$ as $\vec{\gamma}$ and $\vec{\delta}$, respectively. In addition, we set
\begin{center}
$\alpha :\equiv (k, \vec{\gamma} \# u_i ' \Bigl[ \Bigl(  \bigl( i^- ,  u_i [ (\rho ,a) ] \bigr) \uhr J \Bigr) \sigma \Bigr] \# \vec{\delta} ),\quad \beta :\equiv  (k, \vec{\gamma} \# u_i ' [ (i, a\sigma )] \# \vec{\delta} )$.
\end{center}
By IH, we have $u_i ' \Bigl[ \Bigl(  \bigl( i^- ,  u_i [ (\rho ,a) ] \bigr) \uhr J \Bigr) \sigma \Bigr] \lqu_h u_i ' [(i ,a\sigma )]$ for any $h \not\leq i^-$. Therefore, we in particular have $u_i ' \Bigl[ \Bigl(  \bigl( i^- ,  u_i [ (\rho ,a) ] \bigr) \uhr J \Bigr) \sigma \Bigr] \lqu_k u_i ' [(i ,a\sigma )]$, so $\alpha  \lqu_{\infty} \beta$ holds. Then, it follows that $\alpha \lqu_{h} \beta$ holds for any $h \not\leq i^-$ with $h \not\leq k$. Since $\alpha ' \lqu_k \beta$ holds for any $\alpha ' \subset_k \alpha$, we also have $\alpha \lqu_k \beta$ by $\forall$-condition. Finally, consider $h \in I$ with $h \not\leq i^-$ and $h < k$. Then, for any $\alpha ' \subset_h \alpha$, we have $\alpha ' \lqu_h \beta$ also in this case. Then, by subinduction on $\#\sid{h}{\alpha ,\beta}$, it follows that $\alpha \lqu_h \beta$ holds.
\\[10pt]
\textbf{Sublemma 2.}
$u_i\sigma \Bigl[ \Bigl(  \bigl( i^- ,  u_i [ (\rho ,a) ] \bigr) \uhr J \Bigr) \sigma \Bigr] \lqu_h u_i \sigma [(i ,a\sigma )]$ holds for any $h \leq i^-$.

Take a gqod $\gamma \subset_h u_i\sigma \Bigl[ \Bigl(  \bigl( i^- ,  u_i [ (\rho ,a) ] \bigr) \uhr J \Bigr) \sigma \Bigr]$ with $h \leq i^-$ and consider the key case, where $\gamma \subset_h \Bigl(  \bigl( i^- ,  u_i [ (\rho ,a) ] \bigr) \uhr J \Bigr) \sigma$ holds. Then, there is a gqod $\alpha '$ such that $\gamma \equiv (\alpha ' \uhr J)\sigma$ holds. Since we have
\begin{center}
$(\alpha ' \uhr J)\sigma \lll \alpha ' \sigma \leq_h  (u_i [(\rho ,a )] ) \sigma = u_i \sigma [ (\rho ,a\sigma ) ] \lll u_i \sigma [(i ,a\sigma )]$,
\end{center}
$\gamma \lqu_h u_i \sigma [(i ,a\sigma )]$ holds. In the other cases, one easily see $\gamma \lqu_h u_i \sigma [(i ,a\sigma )]$. We obtain the present sublemma by induction on $\#\sid{h}{ u_i\sigma \Bigl[ \Bigl(  \bigl( i^- ,  u_i [ (\rho ,a) ] \bigr) \uhr J \Bigr) \sigma \Bigr] , u_i \sigma [(i ,a\sigma )] }$ and Sublemma 1.
\end{proof}

\begin{prop}[Termination of $\to$]
For any substitution $\sigma$, any context $u[\ast ]$ and any two gqod's $l$ and $r$, if both of $u[ l\sigma ]$ and $u [ r\sigma ]$ belong to $\lt{I,A}$ and $l \rhd r$ holds, then $u[l\sigma ] \ggg u[r\sigma ]$ holds.
\end{prop}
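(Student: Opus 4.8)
The plan is to reduce the proposition directly to the two properties already established for $(\lt{I,A}, \lll^=)$: the relative substitution property (Lemma \ref{sub}) and the monotonicity property (Lemma \ref{mono}). The only gap between what those lemmas supply and what is asked is a matter of domains: Lemma \ref{sub} needs $l\sigma$ and $r\sigma$ themselves to lie in $\lt{I,A}$, whereas the hypothesis here gives only $u[l\sigma], u[r\sigma] \in \lt{I,A}$.

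First I would check that membership in $\lt{I,A}$ is inherited by subterms. In the definition of the path comparable tree-domain, the constraint imposed when forming a connected term $(i, \alpha_1 \# \cdots \# \alpha_n)$ is that $i$ be comparable with every label occurring in the $\alpha_k$'s; restricting attention to a subtree only discards constraints of this shape, so every subterm of a term of $\lt{I,A}$ again belongs to $\lt{I,A}$. Since $l\sigma$ (respectively $r\sigma$) sits exactly at the hole position of $u[\ast]$ inside $u[l\sigma]$ (respectively $u[r\sigma]$), the hypotheses $u[l\sigma], u[r\sigma] \in \lt{I,A}$ at once give $l\sigma, r\sigma \in \lt{I,A}$.

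With this in hand, Lemma \ref{sub} applies to $l \rhd r$ and yields $l\sigma \ggg r\sigma$, that is, $r\sigma \lll l\sigma$. I would then apply Lemma \ref{mono} with $\alpha := r\sigma$ and $\beta := l\sigma$: since $u[r\sigma]$ and $u[l\sigma]$ both lie in $\lt{I,A}$ and $r\sigma \lll l\sigma$, monotonicity gives $u[r\sigma] \lll u[l\sigma]$, that is, $u[l\sigma] \ggg u[r\sigma]$, as desired. Together with the well-foundedness of $\ggg$ on $\lt{I,A}$ (which follows from the corollary to Theorem \ref{wqohra}, since a well quasi ordering admits no infinite strictly descending sequence), this single-step strict decrease yields termination of $\to$.

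The argument carries no substantial obstacle; the one point that needs genuine verification rather than a bare appeal is the closure of $\lt{I,A}$ under subterms, which is precisely what licenses invoking Lemma \ref{sub} from the weaker hypothesis on $u[l\sigma]$ and $u[r\sigma]$. Everything else is the straightforward composition of the substitution and monotonicity lemmas.
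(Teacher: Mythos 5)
Your proof is correct and follows essentially the same route as the paper, whose proof of this proposition is simply the composition of Lemma \ref{sub} (the relative substitution property) with Lemma \ref{mono} (monotonicity). Your additional verification that $\lt{I,A}$ is closed under subterms, which licenses applying Lemma \ref{sub} from the hypothesis $u[l\sigma], u[r\sigma] \in \lt{I,A}$, is a point the paper leaves tacit (note that only this direction is safe: the paper explicitly warns that $u[\alpha]$ need not be path comparable even when $u[\ast]$ and $\alpha$ are), and making it explicit is a genuine, if small, improvement in rigor.
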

\begin{proof}
By Lemma \ref{sub} and Lemma \ref{mono}.
\end{proof}

\section{Concluding discussion and future work}
We have generalized Okada-Takeuti's quasi ordinal diagram systems and proved the well quasi orderedness of the generalized systems $(\lt{I,A} , \lll^=)$, using Dershowitz-Tzameret's tree embedding theorem with gap conditions. This gives one example of usefulness of Dershowitz-Tzameret's tree embedding theorem. We also have examined to which extent $(\lt{I,A} , \lll^=)$ can be used for termination proof methods for higher-order rewrite systems, by proposing two termination proof methods. First, it has been shown that $(\lt{I,A} , \lll^=)$ satisfies not only the monotonicity property but also the numeral substitution property, which holds for the substitutions of numeral trees.
Next, we have formulated the termination proof method relative to a given higher-order rewrite system, taking Buchholz-style hydra game as an example.

We conjecture that the order on $\lt{I,A}$ decreases even if Buchholz-style hydra game allows to move a subtree on \textit{several} leave nodes of another subtree at once. This might lead us to investigate an alternative substitution property, rather than the numeral substitution property reported in this paper.
These computational phenomena will be examined in our future work.
We also investigate further how to use our results of generalized quasi ordinal diagram systems for another pattern-matching-based rewrite programming. In addition, we are working on more graphic (non-tree) versions of quasi ordinal diagrams, based on the current results.

\nocite{*}
\bibliographystyle{eptcs}
\bibliography{bibeptcs2}
\end{document}